\definecolor{sonnengelb}{rgb}{0.9,0.5,0} 
\definecolor{brown}{rgb}{0.5,1,0}
\title{Uniform Sampling of Undirected and Directed Graphs with a Fixed Degree
  Sequence\thanks{This work was partially supported by the DFG Focus
    Program Algorithm Engineering, grant Mu 1482/4-1, and by 
a VolkswagenStiftung grant for the project ``Impact on motif content on
dynamic function of complex networks''.}}
\author{Annabell Berger and Matthias M\"uller-Hannemann\\[1ex]
  Department of Computer Science\\
  Martin-Luther-Universit\"at Halle-Wittenberg\\
  \{berger,muellerh\}@informatik.uni-halle.de}
\date{}
\newtheorem{Definition}{Definition}[section]
\newtheorem{Example}{Example}[section]
\newtheorem{Theorem}{Theorem}[section]
\newtheorem{Lemma}[Theorem]{Lemma}
\newtheorem{Proposition}[Theorem]{Proposition}
\newtheorem{Corollary}[Theorem]{Corollary}
\begin{document}

\thispagestyle{empty}
\maketitle

\begin{abstract}
Many applications in network analysis require algorithms 
to sample uniformly at random from the set of all graphs  
with a prescribed degree sequence.
We present a Markov chain based approach which 
converges to the uniform distribution of all realizations
for both the directed and undirected case. 
It remains an open challenge whether these Markov chains are rapidly mixing.

For the case of directed graphs, we also explain in this paper 
that a popular switching algorithm fails in general to
sample uniformly at random because the state graph of the 
Markov chain decomposes into
different isomorphic components. 
We call degree sequences for which the state
graph is strongly connected  \emph{arc swap sequences}. 
To handle arbitrary degree sequences, we develop two different
solutions. The first uses an additional operation (a reorientation of
induced directed 3-cycles) which makes the state graph strongly
connected, the second selects randomly one of the isomorphic
components and samples inside it.
Our main contribution is a precise characterization of arc swap sequences, 
leading to an efficient recognition algorithm.
Finally, we point out some interesting consequences for network analysis.
\end{abstract}

\section{Introduction}

We consider the problem of sampling uniformly at random from the set
of all realizations of a prescribed degree sequence as simple, 
labeled graphs or digraphs, respectively, without loops.

\paragraph{Motivation.}
In complex network analysis, one is interested in studying certain network
properties of some observed real graph in comparison with an ensemble
of graphs with the same degree sequence to detect deviations from 
randomness~\cite{Milo-etal04}. For example, this is used to study the
motif content of classes of networks~\cite{Milo-et-al04b}.
To perform such an analysis, a uniform sampling from the set of all
realizations is required.
A general method to sample random elements from some set of objects is
via rapidly mixing Markov chains~\cite{Sinclair92,Sinclair93}. 
Every Markov chain can be viewed as a random walk on a directed graph,
the so-called \emph{state graph}.
In our context, its vertices (the states) correspond one-to-one to the
set of all realizations of prescribed degree sequences. 
For a survey on random walks, we refer to Lov{\'a}sz~\cite{Lovasz96}.

A popular variant of the Markov chain approach 
to sample among such realizations is the so-called 
\emph{switching-algorithm}. It starts with a given realization, and
then performs a sequence of 2-swaps.

In the undirected case, a \emph{2-swap} replaces two non-adjacent edges
$\{a,b\}, \{c,d\}$ either by $\{a,c\}, \{b,d\}$ or by $\{a,d\},
\{b,c\}$, provided
that both new edges have not been contained in the graph before the
swap operation. Likewise, in the directed case, 
given two arcs $(a,b), (c,d)$ with all vertices $a,b,c,d$ being
distinct, a \emph{2-swap} replaces these two arcs by $(a,d), (c,b)$
which are currently not included in the realization (the latter is
crucial to avoid parallel arcs).
The switching algorithm is usually stopped heuristically 
after a certain number of iterations, and then outputs the resulting
realization as a ``random element''.  

For undirected graphs, one can prove that this switching algorithm
converges to a random stage. The directed case, however, turns out to
be much more difficult. The following example demonstrates that
the switching algorithm does not even converge to a random stage.

\begin{Example}\label{example:disconnectedness}
Consider the following class of digraphs $D=(V,A)$ with $3n$ vertices
$V = \{ v_1, v_2, \dots, v_{3n}\}$, see
Figure~\ref{fig:disconnectedness}. 
Roughly speaking, this class
consists of induced directed 3-cycles $C_i$ formed by triples 
$V_i = \{ v_{3i}, v_{3i+1}, v_{3i+2}\}$  of vertices,
and arcs $A_i = \{ (v_{3i}, v_{3i+1}), (v_{3i+1},v_{3i+2}), (v_{3i+2},v_{3i}) \}$
for $i \in \{0, \dots, n-1\}$.
All vertices of cycle $C_i$ are connected to all other vertices 
of cycles with larger index than $i$.
More formally, let $A' := \{ (v,w) | v \in V_i, w \in V_j, i < j\}$. 
We set $A := A' \cup (\cup_{i=1}^n \, A_i)$.

It is easy to check that no 2-swap can be applied to this digraph.
However, we can independently reorient each of the $n$ induced
$3$-cycles, leading to $2^{n/3}$ many (isomorphic) realizations of the 
same degree sequence. Thus, if we use a random walk on the state graph of
all realizations of this degree sequence and 
use only 2-swaps to define the possible
transitions between realizations, this state graph 
consists exactly of $2^{n/3}$ many singleton components.  
Hence, a ``random walk'' on this graph will be stuck in a single
realization although exponentially many realizations exist.
\end{Example}

\begin{figure}[t]
\centerline{\includegraphics[height=2.5cm]{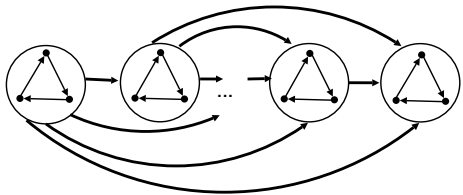}}
\caption{\label{fig:disconnectedness} Example of digraphs where no
  2-swap operation can be applied.}
\end{figure}

More examples of graph classes of this type will be given in the Appendix.
It is interesting to note that 2-swap operations suffice to sample 
directed graphs \emph{with} loops as has been proven by 
Ryser~\cite{Ryser57} in the context of square matrices 
with $\{0,1\}$--entries which
can be interpreted as node-node adjacency matrices of digraphs with loops.

\paragraph{Realizability of degree sequences.}
In order to use a Markov chain approach one needs at least one
feasible realization. In applications from complex network analysis, 
one can usually take the degree sequence of some observed real world
graph. Otherwise, one has to construct a realization.

The realization problem, i.e., characterizing the existence and
finding at least one realization, has quite a long history.
First results go back to the seminal work by Tutte who solved the more
general $f$-factor problem~\cite{Tutte52}. 
Given a simple graph $G=(V,E)$ and a function 
$f : V (G) \mapsto \mathbb{N}_0$, an \emph{$f$-factor} is a subgraph
$H$ of $G$ such
that every vertex $v\in V$ in this subgraph $H$ has exactly degree 
$d_G(v)=f(v)$. 
Tutte gave a polynomial time transformation of the $f$-factor problem
to the perfect matching problem. This implies the first polynomial time
algorithm for finding some $f$-factor~\cite{Tutte54}.
For a survey on efficient algorithms for the $f$-factor problem
by matching or network flow techniques, we refer to Chapter 21 of 
Schrijver~\cite{Schrijver03}. 
Clearly, if the given graph $G$ is complete, then every $f$-factor is a
solution of the degree sequence problem.
Erd\H{o}s and Gallai~\cite{ErdosGallai60} proved a simpler Tutte-type
result for the degree sequence problem. 
Already in 1955,  Havel~\cite{Havel55} 
developed a simple greedy-like algorithm to construct a realization of a given
degree sequence as a simple undirected graph without loops. A few years later,
Hakimi~\cite{Hakimi62,Hakimi65} studied the simpler case of undirected
graphs with multiple edges. 

It is also well-known how to test
whether a prescribed degree sequence can be realized as a digraph.
Chen~\cite{Chen66} presented necessary and sufficient conditions for
the realizability of degree sequences which can be checked in linear time. 
Again, the construction of a concrete realization is equivalent to
an $f$-factor problem on a corresponding undirected bipartite graph.
Kleitman and Wang~\cite{KleitmanWang73}
found a greedy-type algorithm
generalizing previous work by Havel~\cite{Havel55} and 
Hakimi~\cite{Hakimi62,Hakimi65}. 
This approach has recently been rediscovered by
Erd\H{o}s et al.~\cite{Erdos-etal09}.

\paragraph{Related work.}

Kannan et al.~\cite{KannanTetaliVempala99} showed how to sample
bipartite undirected graphs via Markov chains. 
They proved polynomial
mixing time for regular and near-regular graphs.
Cooper et al.~\cite{CooperDyerGreenhill07} extended this work to
non-bipartite undirected,
$d$-regular graphs and proved a polynomial mixing time for the 
switching algorithm. More precisely, they upper bounded
the mixing time in these cases by $d^{15} n^8 (dn \log(dn) +
\log(\varepsilon^{-1}))$, for graphs with $|V| = n$. 
In a break-through paper,
Jerrum, Sinclair, and Vigoda~\cite{JerrumSinclairVigoda04} presented
a polynomial-time almost uniform sampling algorithm for perfect
matchings in bipartite graphs. Their approach can be used to sample
arbitrary bipartite graphs and arbitrary 
digraphs with a specified degree sequence in $O( 
n^{14} \log^4 n)$ via the above-mentioned reduction due to Tutte.
In the context of sampling binary contingency tables, 
Bez{\'a}kov{\'a} et al.~\cite{BezakovaBhatnagarVigoda07} managed to improve
the running time for these sampling problems to 
$O(n^{11} \log^5 n)$, which is still far from practical.


McKay and Wormald~\cite{McKayWormald90,McKayWormald91} 
use a configuration model and generate 
random undirected graphs with degrees bounded by
$o(n^{1/2} )$
with uniform distribution in $O(m^2 d_{max} )$ time,
where $d_{max}$ denotes the maximum degree, and $m$ the number of edges.
Steger and Wormald~\cite{StegerWormald99} introduced 
a modification of the configuration model that leads to a fast algorithm
and samples asymptotically uniform for degrees up to $o(n^{1/28} )$. 
Kim and Vu ~\cite{KimVu03} improved the analysis of Steger and Wormald's
algorithm, proving that the
output is asymptotically uniform for degrees up to
$O(n^{1/3-\varepsilon})$, for any $\varepsilon > 0$.
Bayati et al.~\cite{Bayati-etal09} recently presented a
nearly-linear time algorithm for counting and randomly generating 
almost uniformly
simple undirected graphs with a given degree sequence where
the maximum degree is restricted to $d_{max} = O(m^{1/4-\tau})$,
and $\tau$ is any positive constant.

\paragraph{Random walks and Markov chains.}
Let us briefly review the basic notions of random walks and
their relation to Markov
chains. See \cite{Lovasz96,JerrumSinclair97,Sinclair93} for more details.
A random walk (Markov chain) on a digraph $D=(V,A)$
 is a sequence of vertices
 $v_0,v_1,\dots,v_t,\dots$ where $(v_i,v_{i+1})\in A$. Vertex $v_0$
 represents the initial state. Denote by $d_D^+(v)$ the out-degree of vertex
 $v \in V$.
 At the $t$th step we move to an
 arbitrary neighbor of $v_{t}$ with probability $1/{d_D^+(v_t)}$ or
 stay at $v_{t}$ with probability $(1-\nu(v_t))/{d_D^+(v_t)},$ where $\nu(v_t)$
 denotes the number of neighbors of $v_t.$ Furthermore, we define the
 distribution of $V$ at time $t\in \mathbb{Z}^+$ as the function
 $P_{t} \in [0,1]^{|V|}$ with $P_{t}(i):= Prob(v_t=i).$ 
A well-known
 result \cite{Lovasz96} is that $P_t$ tends to the uniform stationary
 distribution for $t\rightarrow \infty$, if the digraph is 
 (1) non-bipartite (that means aperiodic), (2) strongly connected (i.e., irreducible), 
 (3) symmetric, and (4) regular.
 A digraph $D$ is \emph{$d_D$-regular} if all vertices have the same in- and
out-degrees~$d_D$.

In this paper, we will view all Markov chains as random walks on
symmetric $d_D$-regular digraphs $D=(V,A)$ 
whose vertices correspond to the state space $V$. 
The transition probability on each arc $(v,w) \in A$ will be the
constant $1/d_D$.

\paragraph{Our contribution.}
In this paper, we prove the following results.
\begin{itemize}
\item
For undirected graphs we analyze the well-known switching algorithm.
It is straight-forward to translate the switching algorithm  
into a random walk on an appropriately defined Markov chain.
This Markov chain corresponds to a symmetric, regular, strongly
connected, non-bipartite simple digraph with directed loops allowed. 
Thus, it converges to the uniform distribution of all realizations.
Each realization of the degree sequence
is a vertex of this digraph, and two realizations are mutually connected  
by arcs if and only if their symmetric difference is an alternating
4-cycle (i.e., corresponds to a 2-swap).
This graph becomes regular by adding additional loops, see
Section~\ref{sec:undirected}. 

Cooper et al.~\cite{CooperDyerGreenhill07} already showed in the
context of regular graphs that the underlying digraph of 
this Markov chain is strongly
connected, but we give a much simpler proof of this property.
Its diameter is bounded by the number $m$ of edges
in the prescribed degree sequence.  

\item
Carefully looking at our Example~\ref{example:disconnectedness},
we observe that in the directed case the state graph becomes 
strongly connected if we add a second type of operation to transform 
one realization into another: Simply reorient the arcs of an induced
directed 3-cycle. We call this operation \emph{3-cycle reorientation}.
We give a graph-theoretical proof
that 2-swaps and 3-cycle
reorientations suffice not only here, but also in general for
arbitrary prescribed degree sequences.
These observations allow us to define a 
Markov chain, very similar to the undirected case. 
The difference is that
two realizations are mutually connected  
by arcs if and only if their symmetric difference is either an
alternating directed 4-cycle or 6-cycle with exactly three different vertices.
Again, this digraph becomes regular by adding additional loops, see
Section~\ref{sec:general-markov}. The transition probabilities are of
order $O(1/m^2)$, and
the diameter can be bounded by $O(m)$, where $m$ denotes 
the number of arcs in the prescribed degree sequence. 

In the context of $(0,1)$-matrices with given marginals (i.e.,
prescribed degree sequences in our terminology), 
Rao et al.~\cite{RaoJanaBandyopadhyay96} similarly observed that
switching operations on so-called ``compact alternating hexagons'' are
necessary. A compact alternating hexagon is a  
$3\times 3$-submatrix, which can be interpreted as the adjacency matrix
of a directed $3$-cycle subgraph. They define a random walk on a
series of digraphs, starting with a non-regular state graph
which is iteratively updated towards regularity, i.e. their Markov
chain converges asymptotically to the uniform distribution. 
However, it is unclear how fast this process converges and whether
this is more efficient than starting directly with a single regular
state graph. Since Rao et al.\ work directly on
matrices, their transition probabilities are of order $O(1/n^6)$,
i.e., by several orders smaller than in our version.

Very recently, Erd\H{o}s et al.~\cite{Erdos-etal09} proposed a similar
Markov chain approach using 2-swaps and 3-swaps. The latter type of
operation exchanges a simple directed 3-path or 3-cycle $(v_1,v_2),
(v_2,v_3)$, $(v_3,v_4)$ (the first and last vertex may be identical) by
$(v_1,v_3), (v_3, v_2), (v_2,v_4)$, but is a much larger set of operations
than ours. 

\item
Although in directed graphs 2-swaps alone do not suffice to sample uniformly in general, the
corresponding approach is still frequently used in network analysis.
One reason for the popularity of this approach --- 
in addition to its simplicity --- 
might be that it empirically worked in many cases quite 
well~\cite{Milo-etal04}.
In this paper, we study under which conditions this approach 
can be applied and provably leads to correct uniform sampling.
We call such degree sequences \emph{arc-swap sequences}, and 
give a graph-theoretical characterization which can be checked in
polynomial time. More specifically, we can recognize arc-swap
sequences in $O(m^2)$ time using matching techniques.
Using a parallel Havel-Hakimi algorithm by LaMar~\cite{LaMar09},
originally developed to realize Euler sequences with an odd number of
arcs, the recognition problem
can even be solved in linear time. 
This algorithm also allows us to determine the  
number of induced directed $3$-cycles 
which appear in \emph{every} realization. 

However, the simpler approach comes with a price: 
our bound on the diameter of the state graph becomes $m n$ and so 
is by one order of $n$ worse 
in comparison with using 2-swaps and 3-cycle reorientations.
Since half of the diameter is a trivial lower bound 
on the mixing time
and the diameter also appears as a factor in known upper bounds, 
we conjecture that the classical switching
algorithm requires a mixing time $\tau_{\varepsilon}$ with
an order of $n$ more steps as the variant with
$3$-cycle reorientation.

In those cases where 2-swaps do not suffice
to sample uniformly, the state graph decomposes into $2^k$
strongly connected components, where $k$ is
the  number of induced directed $3$-cycles 
which appear in every realization. 
We can also efficiently determine the number of strongly
connected components of the state graph (of course, without
explicitly constructing this exponentially sized graph). 
However, all these components are isomorphic.
This can be exploited as follows: 
For a non-arc-swap sequence, 
we first determine all those induced directed $3$-cycles 
which appear in every realization. By reducing the in- and
out-degrees for all vertices of these $3$-cycles by one, we then obtain
a new sequence, now guaranteed to be an arc-swap sequence. On the
latter we can either use the switching algorithm or 
our variant with additional $3$-cycle reorientations on a smaller
state graph with a reduced diameter $n (m-3k)$ or $m-3k$, 
respectively, yielding an important practical advantage.

Our results 
give a theoretical foundation to compute certain network
characteristics on unlabeled digraphs in a single component using
2-swaps only. For example, this includes the analysis of the 
motif content~\cite{Milo-et-al02}. Likewise
we can still compute the average diameter among all realizations if 
we work in a single component. However, for other network
characteristics, for example betweenness centrality 
on edges~\cite{Koschuetzki-et-al05}, 
this leads in general to incorrect estimations.

\end{itemize}

\paragraph{Overview.} The remainder of the paper is structured as follows.
In Section~\ref{sec:undirected}, we start with the undirected case. 
We introduce  appropriately defined state graphs  
underlying our Markov chains, and show for these
graphs crucial properties like regularity and strong connectivity.  
We also upper bound their diameter.
The more difficult directed case is presented in 
Section~\ref{sec:general-markov}.
Afterwards, in Section~\ref{sec:arc-swap-sequences}, we characterize
those degree sequences for which a simpler Markov chain based on
2-swaps provably leads to uniform sampling in the directed case. 
We also describe a few consequences and applications. 
Finally, we conclude with a short summary and remarks on future work.

\section{Sampling Undirected Graphs}\label{sec:undirected}

In this section we show how to sample undirected graphs with
a prescribed degree sequence uniformly at
random with a  random walk. 
This section is structured as follows.
We first give a formal problem definition and introduce some notation. Then 
we introduce an appropriately defined Markov chain and prove that it
has all desired properties.

\paragraph{Formal problem definition.}
In the undirected case, a degree sequence $S$ of order $n$ is the ordered set
$(a_1, a_2, \dots, a_n)$ with $a_i \in \mathbb{Z}^+, a_i > 0$. 
Let $G=(V,E)$ be an
 undirected labeled graph $G=(V,E)$ without loops and parallel edges and
 $|V|=n$. We define the \emph{degree-function} $d:V\rightarrow
 \mathbb{Z}^+$ which assigns to each vertex $v_i\in V$ the number of
 incident edges. We call $S$ a
 \emph{graphical sequence} if and only if there exists at least one
 undirected labeled graph $G=(V,E)$ without any loops or parallel edges
 which satisfies
 $d(v_i)=a_i$ for all $v_{i}\in V$ and $i\in \{1,\dots,|V|\}.$ 
 Any such undirected graph $G$ is called
 \emph{realization} of $S$.

We define an \emph{alternating  walk} $P$ for a
graph $G=(V,E)$ as a sequence $P:=(v_1,v_2,\dots, v_{\ell})$ of
vertices $v_i\in V$ where either $\{v_i,v_{i+1}\}\in E(G)$ and
$\{v_{i-1},v_{i}\}\notin E(G)$  or $\{v_i,v_{i+1}\}\notin E(G)$ and
$\{v_{i-1},v_{i}\}\in E(G)$ for $i\mod{2}=1.$ 
The length of a walk (or path, cycle, respectively) is the number of
its edges. We call an  
alternating
walk $C$ of even length \emph{alternating cycle} if $v_1=v_{\ell}$ is
fulfilled. 
For two realizations $G, G'$, the symmetric difference of their edge
sets $E(G)$ and $E(G')$ is denoted as
$G\Delta G' := (E(G) \setminus E(G')) \cup (E(G') \setminus E(G))$.
A graph is called \emph{Eulerian} 
if every vertex has even degree.
Note that the symmetric difference $G\Delta G'$ of two realizations
$G,G'$ is Eulerian and hence 
always decomposes into a number of alternating cycles.

\paragraph{The Markov chain.}

We denote by $\Psi = (V_{\psi}, A_{\psi})$ the digraph for our random
walk, the \emph{state graph}, for short. 
Its underlying vertex set $V_{\psi}$ is the set of
all realizations of a given degree sequence $S$.
For a realization $G$, we denote by $V_G$ the
corresponding vertex in $V_{\psi}$.
The arc set $A_{\psi}$ is defined as follows.
\begin{enumerate}
\item[a)] We connect two vertices $V_G,V_{G'} \in V_{\psi}, G\neq G'$ with arcs $(V_G,V_{G'})$ and $(V_{G'},V_G)$ if and only if $|G\Delta G'|=4$ is fulfilled.
\item[b)] We set for each pair of non-adjacent edges $\{v_{i_{1}},v_{i_{2}}\},\{v_{i_{3}},v_{i_{4}}\}\in E(G), i_{j}\in \{1,\dots,n\}$ a directed loop $(V_G,V_G)$ if and only if 
 $\{v_{i_{1}},v_{i_{4}}\}\in E(G) \lor \{v_{i_{3}},v_{i_{2}}\}\in
 E(G)$.
\item[c)] We set for each pair of non-adjacent edges $\{v_{i_{1}},v_{i_{2}}\},\{v_{i_{3}},v_{i_{4}}\}\in E(G), i_{j}\in \{1,\dots,n\}$ a directed loop $(V_G,V_G)$ if and only if 
 $\{v_{i_{1}},v_{i_{3}}\}\in E(G) \lor \{v_{i_{2}},v_{i_{4}}\}\in
 E(G)$.
\item[d)] We set one directed loop $(V_G,V_G)$ for all $V_G\in V_{\psi}.$ 
\end{enumerate}

\begin{Lemma}\label{lemma:state-graph-undirected-basic-properties}
 The state graph $\Psi = (V_{\psi}, A_{\psi})$ is non-bipartite, 
symmetric, and regular. 
\end{Lemma}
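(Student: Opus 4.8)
The plan is to verify the three properties---non-bipartiteness, symmetry, and regularity---separately, since each follows from a different part of the arc-set definition.

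First I would dispatch non-bipartiteness, which is immediate: part (d) of the definition places a directed loop $(V_G,V_G)$ at \emph{every} vertex $V_G \in V_\psi$. A loop is an odd (length-one) closed walk, so $\Psi$ contains odd closed walks and hence is aperiodic (non-bipartite) in the sense used in the excerpt. No further work is needed here.

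Next I would establish symmetry. The only non-loop arcs come from part (a), and that rule is manifestly symmetric in $G$ and $G'$: it connects $V_G$ and $V_{G'}$ with \emph{both} arcs $(V_G,V_{G'})$ and $(V_{G'},V_G)$ precisely when $|G\,\Delta\,G'|=4$. The loop arcs from (b), (c), (d) are their own reverses. So whenever $(V_G,V_{G'}) \in A_\psi$ we also have $(V_{G'},V_G)\in A_\psi$, which is exactly symmetry. I would also want to observe that $|G\,\Delta\,G'|=4$ corresponds exactly to a 2-swap: since $G\,\Delta\,G'$ is an Eulerian subgraph (as noted in the excerpt) decomposing into alternating cycles, the smallest possible symmetric difference between two distinct realizations is a single alternating 4-cycle, and such a 4-cycle is precisely the edge configuration of a 2-swap. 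This ties the combinatorial rule back to the swap operation but is not strictly needed for the three properties.

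The main work, and the expected obstacle, is regularity: I must show every vertex $V_G$ has the same in-degree and out-degree $d_\psi$, independent of which realization $G$ it represents. By symmetry, in-degree equals out-degree at each vertex, so it suffices to count out-degree. The out-degree of $V_G$ counts the number of \emph{ordered} pairs of non-adjacent edges together with a choice of recombination (the two ways a 2-swap can reconnect $\{a,b\},\{c,d\}$), where each such choice contributes either a genuine transition arc via (a) or a loop via (b) or (c), depending on whether the two new edges are both absent or at least one is already present. The key idea is that for a \emph{fixed} degree sequence $S$, the total number of ordered pairs of non-adjacent edges, weighted by the two possible recombinations, depends only on $S$ (through $m$ and the degrees) and not on the particular realization $G$: every such pair-plus-recombination yields exactly one arc leaving $V_G$, and the rules (a), (b), (c) are designed so that each configuration is charged exactly once---to a transition when the swap is admissible and to a loop otherwise. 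Adding the single loop from (d) keeps this count uniform across all vertices. The delicate point I would check carefully is the bookkeeping: that the case distinction in (a)--(c) partitions all pair-and-recombination configurations with no double counting and no omission, so that the out-degree is the constant $2\binom{m}{2} - (\text{adjacent pairs}) + 1$ or a similarly $S$-determined expression. Once this count is seen to be a function of $S$ alone, regularity follows, completing the lemma.
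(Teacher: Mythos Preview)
Your proposal is correct and follows essentially the same route as the paper: non-bipartiteness from the loop in (d), symmetry from the two-way arcs in (a), and regularity from the observation that each unordered pair of non-adjacent edges contributes exactly two outgoing arcs (one per recombination, either a genuine transition or a loop). The paper makes the regularity count fully explicit---the number of non-adjacent edge pairs is $\binom{|E|}{2}-\sum_{i=1}^{n}\binom{a_i}{2}$, hence the out-degree is $2\bigl(\binom{|E|}{2}-\sum_{i}\binom{a_i}{2}\bigr)+1$---whereas you leave the formula schematic; tightening that (and dropping the word ``ordered,'' since the pairs are unordered with two recombinations each) would make your argument match the paper exactly.
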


\begin{proof}
Non-bipartiteness follows from the insertion of directed loops.
Likewise, symmetry is obvious since we always introduce arcs in both
directions in case a). 
For each pair of non-adjacent edges of a realization  $G$, 
we introduce exactly two arcs in $\Psi$. These arcs either connect two
neighboring states or are directed loops.
Thus each vertex $V_G \in V_{\psi}$ has an out-degree of twice the number
of non-adjacent edges in $G$ plus one (for the loop in step d)). 
Due to symmetry, the out-degree equals
the in-degree. For each realization $G$, the number of pairs of
non-adjacent edges is exactly ${|E| \choose 2} - \sum_{v_i \in V(G)}
{d_G(v_i) \choose 2} = {|E| \choose 2} - \sum_{i=1}^n \; {a_i \choose 2}$, that
is a constant independent of $G$.
\end{proof}

The next step is to show that the state graph is strongly connected.
We first prove the following auxiliary proposition which asserts that
the symmetric difference of two different realizations always
contains a vertex-disjoint path of length three.

\begin{Proposition}\label{prop:3-path}
Let $S$ be a graphical sequence and $G$ and $G'$ be two different
realizations, i.e., $G\Delta G' \neq \emptyset$.
Then there exists  a vertex-disjoint alternating walk
$P=(v_1,v_2,v_3,v_4)$ in $G\Delta G'$ 
with $\{v_1,v_2\},\{v_3,v_4\}\in E(G)$ and $\{v_2,v_3\}\in E(G').$
\end{Proposition}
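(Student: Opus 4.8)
The plan is to argue by contradiction using only the degree balance of the symmetric difference. Set $H := G\Delta G'$ and classify each of its edges as a \emph{$G$-edge} if it lies in $E(G)\setminus E(G')$ and as a \emph{$G'$-edge} if it lies in $E(G')\setminus E(G)$; these are the only two kinds of edges occurring in $H$. Because $G$ and $G'$ realize the same degree sequence, at every vertex the number of incident $G$-edges equals the number of incident $G'$-edges; this is exactly the balance that makes $H$ Eulerian, and it guarantees that $H$ contains a $G$-edge and, at every vertex meeting one, a matching $G'$-edge. In this terminology the proposition asks for an alternating path consisting of a $G$-edge, a $G'$-edge, and a $G$-edge on \emph{four distinct} vertices.

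A first instinct is to take any alternating cycle from the Eulerian decomposition of $H$ and read off its first three edges. This does not suffice, since the cycles in such a decomposition need not be vertex-disjoint, so the three edges may close up into a triangle instead of a genuine path; producing four \emph{distinct} vertices is the real obstacle. I would therefore prove existence directly, by assuming that no such path exists and deriving a contradiction.

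The heart of the argument would be a local forcing rule: if $\{a,b\}$ is a $G$-edge and $\{b,c\}$ a $G'$-edge with $c\ne a$, then every $G$-edge incident to $c$ must have its other endpoint in $\{a,b\}$, since otherwise that endpoint would be a fourth distinct vertex completing the forbidden path. As $\{b,c\}$ prevents $\{c,b\}$ from being a $G$-edge, the only possible $G$-edge at $c$ is $\{c,a\}$, and balance at $c$ forces it to exist; hence $c$ carries exactly one $G$-edge and one $G'$-edge. I would then start from any $G$-edge $\{v_1,v_2\}$, use balance at $v_2$ to select a $G'$-edge $\{v_2,v_3\}$ with $v_3\ne v_1$, and apply the rule to conclude that $\{v_1,v_3\}$ is a $G$-edge and that $v_3$ carries only this one $G$-edge. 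Since $v_1$ now meets two $G$-edges, balance gives it at least two $G'$-edges, so I can pick a $G'$-edge $\{v_1,u\}$ with $u\notin\{v_1,v_2,v_3\}$. Applying the rule a second time, to the $G$-edge $\{v_3,v_1\}$ and the $G'$-edge $\{v_1,u\}$, produces a $G$-edge $\{u,v_3\}$; but then $v_3$ meets two distinct $G$-edges, $\{v_1,v_3\}$ and $\{u,v_3\}$, contradicting that it carries only one.

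The step I expect to be delicate is precisely the one the naive cycle argument glosses over, namely ensuring that the four vertices are distinct, and this is what the degree balance is used for, twice: once to force the \emph{unique} $G$-edge at $v_3$, and once to supply the extra $G'$-edge at $v_1$ that feeds the second application of the rule. I would carefully verify the small non-degeneracies ($v_3\ne v_1$ from the color clash on $\{v_1,v_2\}$, and $u\notin\{v_1,v_2,v_3\}$ because $u$ is $G'$-adjacent to $v_1$) to be sure that the two $G$-edges at $v_3$ are genuinely different, which is what closes the contradiction.
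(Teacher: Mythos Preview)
Your proof is correct and uses the same core idea as the paper's: the degree balance of $G\Delta G'$ forces, when the obvious extension at $v_3$ fails, the unique $G$-edge at $v_3$ to close back to $v_1$, after which a fresh $G'$-neighbour $u$ of $v_1$ finishes the argument. The only difference is packaging: you frame it as a contradiction via a reusable ``forcing rule'' applied twice (yielding a second $G$-edge at $v_3$), whereas the paper argues constructively and exhibits the path $(v_3,v_1,v_4,v_5)$ directly in the second case.
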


\begin{proof}
In the proof of this proposition, we argue only about edges in the
symmetric difference $G\Delta G'$ which is assumed to be non-empty. 
Therefore, there are edges $\{v_1, v_2\}, \{v_2,
v_3\}$ with $\{v_1,v_2\} \in E(G)$ and $\{v_2,v_3\} \in
E(G')$ and $v_1 \neq v_3$. If there is also an edge $\{v_3,v_4\} \in
E(G)$ with
$v_4\neq v_1$, we are done with the vertex-disjoint 
alternating walk $P = (v_1,v_2,v_3,v_4)$ as desired.
Otherwise, the symmetric difference must contain the edge $\{v_3,v_1\}
\in E(G)$ and also some edge $\{v_1,v_4\} \in E(G')$. Note that $v_4
\neq v_2$ and $v_4 \neq v_3$. This implies the existence of another
edge $\{v_4,v_5\} \in E(G)$. Note also that $v_5 \neq v_3$, since we
are in the case that $\{v_3,v_4\}$ does not exist. 
Either $v_5 = v_2$ or $v_5$ is a new vertex disjoint from $\{ v_1,
\dots, v_4\}$. 
Therefore, in both cases $P =
(v_3,v_1,v_4,v_5)$ is a vertex-disjoint alternating walk composed of
edges from the symmetric difference.     
\end{proof}

\begin{Lemma}\label{lemma:undirected-connectedness} 
Let $S$ be a graphical sequence and let $G \neq G'$ be two realizations. 
Then there exist 
realizations $G_0,G_1,\dots,G_k$ with $G_0:=G$, $G_k:=G'$ and 
 $|G_i\Delta G_{i+1}|=4$ 
where $k\leq \frac{1}{2}|G\Delta G'|-1.$ 
\end{Lemma}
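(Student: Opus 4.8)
The plan is to argue by induction on $|G\Delta G'|$, which for $G\neq G'$ is a positive even number because $G\Delta G'$ is Eulerian and decomposes into alternating cycles. The inductive claim is exactly the stated bound $k\le \tfrac{1}{2}|G\Delta G'|-1$, and I would note at the outset that a single $2$-swap changes precisely two edges in each direction, so consecutive realizations automatically satisfy $|G_i\Delta G_{i+1}|=4$. For the base case $|G\Delta G'|=4$ the symmetric difference is a single alternating $4$-cycle, hence $G'$ is obtained from $G$ by one $2$-swap and I take $G_0=G$, $G_1=G'$, $k=1=\tfrac12\cdot 4-1$.

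For the inductive step write $|G\Delta G'|=2\ell$ with $\ell\ge 3$. The heart of the argument is to produce one $2$-swap turning $G$ into a realization $G_1$ with $|G_1\Delta G'|\le 2\ell-2$; the induction then closes via $k\le 1+(\ell-2)=\tfrac12|G\Delta G'|-1$. By Proposition~\ref{prop:3-path} there is a vertex-disjoint alternating walk $P=(v_1,v_2,v_3,v_4)$ inside $G\Delta G'$ with $\{v_1,v_2\},\{v_3,v_4\}\in E(G)$ and $\{v_2,v_3\}\in E(G')$; since all three edges lie in the symmetric difference we have $\{v_1,v_2\},\{v_3,v_4\}\notin E(G')$ and $\{v_2,v_3\}\notin E(G)$. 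The intended swap deletes $\{v_1,v_2\},\{v_3,v_4\}$ and inserts $\{v_2,v_3\}$ and $\{v_1,v_4\}$. Tracking the four affected edges shows the drop: deleting the two $G$-edges removes them from the symmetric difference, inserting $\{v_2,v_3\}\in E(G')$ removes it as well, and inserting $\{v_1,v_4\}$ changes the symmetric difference by at most one, so the net change is at most $-2$, as required.

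The step I expect to be the main obstacle is the \emph{admissibility} of this swap: inserting $\{v_1,v_4\}$ is legal only when $\{v_1,v_4\}\notin E(G)$, and Proposition~\ref{prop:3-path} controls the four vertices but says nothing about this reconnecting edge. To force $\{v_1,v_4\}\notin E(G)$ I would not take $P$ arbitrarily but choose $v_1,\dots,v_4$ as four consecutive vertices of a \emph{shortest} alternating cycle $C$ of $G\Delta G'$ (which exists by the Eulerian decomposition). If $C$ has length four, then $\{v_1,v_4\}$ is the fourth cycle edge, which lies in $E(G')\setminus E(G)$, so the swap is admissible and removes all four edges at once. If $C$ has length at least six, a minimality argument excludes the dangerous configuration: were $\{v_1,v_4\}$ a symmetric-difference edge belonging to $E(G)$, then this chord together with the complementary arc of $C$ would form a strictly shorter alternating cycle, contradicting the choice of $C$. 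Hence $\{v_1,v_4\}\in E(G)$ can occur only if $\{v_1,v_4\}\in E(G)\cap E(G')$ is a \emph{common} edge.

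This leaves the single delicate case $\{v_1,v_4\}\in E(G)\cap E(G')$, where minimality yields no contradiction and the swap above is genuinely blocked; this is the technical crux of the proof. I would first apply the same test to the other consecutive triples of $C$: each either yields an admissible swap (and we are done, as any single swap that decreases the symmetric difference by at least two suffices for the induction) or forces its own reconnecting edge to be a common edge. If every triple is blocked, then all the corresponding chords of $C$ lie in $E(G)\cap E(G')$, and I would exploit one such common chord, pairing it in a $2$-swap with a suitably chosen $G$-edge of $C$ so that the two inserted edges are $E(G')$-edges absent from $G$; the bookkeeping then gives net change $+1-1-1-1=-2$, the cost of temporarily removing the common chord being outweighed by the two installed $G'$-edges and the removed $G\setminus G'$-edge. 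Verifying that such a swap is always available in this last configuration is the most technical point, but once it is established each step decreases the symmetric difference by at least two, the induction closes, and the diameter bound $k\le \tfrac12|G\Delta G'|-1$ follows.
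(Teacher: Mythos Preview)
Your overall induction framework matches the paper's, but the ``heart of the argument'' you identify --- always producing a single $2$-swap that drops $|G_1\Delta G'|$ by at least two --- is not just technically delicate in the final case, it is false. Take $V=\{1,\dots,8\}$, let $G$ be the $8$-cycle $(1,2,3,4,5,6,7,8,1)$ and let $G'$ be the disjoint union of the $4$-cycles $(1,4,5,8)$ and $(2,3,6,7)$. Then $G\cap G'=\{23,45,67,81\}$, the symmetric difference is the single alternating $8$-cycle on $\{12,34,56,78\}\cup\{14,36,58,72\}$, and every distance-$3$ chord of this cycle is a common edge --- exactly your hard case. A direct check of all $2$-swaps on the $8$-cycle $G$ shows that none reduces the symmetric difference: whenever a swap would insert an edge of $G'\setminus G$, the companion inserted edge is already in $G$ (blocked), and all admissible swaps give net change $0$. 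Yet the lemma's bound $k\le 3$ holds: e.g.\ first swap $\{2,3\},\{5,6\}\to\{2,5\},\{3,6\}$ (no decrease), then $\{1,2\},\{3,4\}\to\{1,4\},\{2,3\}$ (drop by $4$), then $\{2,5\},\{7,8\}\to\{2,7\},\{5,8\}$. So a monotone-decrease strategy cannot work, and your last paragraph is not a missing verification but a wrong plan.

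The paper avoids this by \emph{not} insisting on a single reducing swap. In the case $\{v_1,v_4\}\in E(G)\cap E(G')$ it forms the shorter alternating cycle $C^*=(C\setminus P)\cup\{\{v_1,v_4\}\}$, swaps \emph{all} of $C^*$ at once to reach an intermediate realization $G^*$, and then applies the induction hypothesis separately to $G\Delta G^*$ and $G^*\Delta G'$; the two step-counts add up to the desired bound even though neither piece came from a monotone descent. The case $\{v_1,v_4\}\in E(G)\setminus E(G')$ (which your minimality trick tries to exclude) is handled similarly in the paper with a more involved subcase analysis. If you want to salvage your shortest-cycle idea, you would need to replace the greedy step in the blocked case by this kind of split-and-recurse argument rather than a single swap.
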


\begin{proof} 
We prove the lemma by induction according to the cardinality of the
symmetric difference $|G\Delta G'|=2\kappa.$ 
For $\kappa:=2$ we get $|G\Delta G'|=4.$ The correctness of our claim follows with $G_1:=G'$. 
We assume the correctness of our claim for all $\kappa\leq \ell.$ 
Consider $|G\Delta G'|=2\ell +2$. 
According to Proposition~\ref{prop:3-path}, there exists in $G\Delta
G'$  an alternating vertex-disjoint walk $P=(v_1,v_2,v_3,v_4)$ with
$\{v_1,v_2\},\{v_3,v_4\}\in E(G)$ and $\{v_3,v_2\}\in E(G').$ 
\begin{enumerate}
 \item[case 1:] Assume $\{v_1,v_4\}\in E(G')\setminus E(G).$\\
This implies $\{v_1,v_4\}\in G\Delta G'$.
$G_1:=(G_0\setminus \{\{v_1,v_2\},\{v_3,v_4\}\})\cup \{\{v_2,v_3\},\{v_1,v_4\}\}$ is a realization of $S$ and it follows
$|G_0\Delta G_1|=4$ and $|G_1\Delta G'|=2\ell+2-4=2(\ell-1).$ 
Note that after this step, $G_1\Delta G'$ may consist of several
connected components, but each of them has strictly smaller
cardinality. 
Thus, we obtain realizations $G_1,G_2,\dots,G_k$ with $G_k:=G'$ and $|G_i \Delta G_{i+1}|=4$  where $k-1\leq \frac{1}{2}|G_1\Delta G'|-1.$ Hence, we get the sequence $G_0,G_1,\dots,G_k$ with $k= 1+\frac{1}{2}|G_1\Delta G'|-1=\frac{1}{2}(|G\Delta G'|-4)\leq \frac{1}{2}|G\Delta G'|-1$.

\item[case 2:] Assume $\{v_1,v_4\}\in E(G)\cap E(G').$\\
This implies $\{v_1,v_4\}\notin G\Delta G'.$
$P$ is an alternating subpath of an alternating cycle
$C=(v_4,v_i,\dots,v_j,$ $v_1, v_2, v_3, v_4)$ with $\{v_i,v_4\},\{v_1,v_j\}\in A(G')$ of
length $|C|\geq 6$. We construct a new alternating cycle
$C^*:=(C\setminus P)\cup \{\{v_1,v_4\}\}$ with length
$|C^*|=|C| - 2\geq 4$. 
We swap the arcs in $C^*$ and get a realization $G^*$ of S with
$|G_0\Delta G^*|=|C^*|\leq 2\ell$ and $|G^*\Delta G'|=|G \Delta
G'|-(|{C^*}|-1)+1\leq 2\ell.$ The symmetric difference 
$G^*\Delta G'$ may consist of several
connected components, but their total length is bounded by $2\ell$.
Thus there exist 
sequences 
$G_0^{1}:=G,G_1^{1},\dots,G_{k_1}^1:=G^*$ and
$G_0^{2}:=G^*,G_1^{2},\dots,G_{k_2}^2=G'$ with $k_1\leq
\frac{1}{2}|G_0\Delta G^*|-1 = \frac{1}{2} |C^*|-1$ 
and $k_2 \leq \frac{1}{2} |G^* \Delta G'|-1 = \frac{1}{2} (|G\Delta G'| -
(|C^*|-1)+1)-1 \leq 2 \ell$. 
We arrange these sequences
one after another and get a sequence which fulfills $k=k_1+k_2 
= \frac{1}{2} |C^*|-1 + \frac{1}{2} (|G\Delta G'| -
(|C^*|-1)+1)-1  = \frac{1}{2}(|G\Delta G'|) - 1$.

\item[case 3:] Assume $\{v_1,v_4\}\in E(G)\setminus E(G').$\\
This implies $\{v_1,v_4\}\in G\Delta G'.$
Assume first that the symmetric difference $G\Delta G'$ 
contains an alternating cycle $C$ which avoids $P$. 
Then, we can apply the induction hypothesis to $C$. 
Swapping the edges of $C$, we get a realization $G^*$ of sequence $S$
 with
$|G_0\Delta G^*|=|{C^*}|\leq 2\ell$ and $|G^*\Delta G'|=|G \Delta
G'|-|C|\leq 2 \ell$. 
According to the induction hypothesis there exist sequences
$G_0^{1}:=G,G_1^{1},\dots,G_{k_1}^1:=G^*$ and
$G_0^{2}:=G^*,G_1^{2},\dots,G_{k_2}^2=G'$ with $k_1\leq
\frac{1}{2}|G_0\Delta G^*|-1$ and $k_2 \leq \frac{1}{2} |G^* \Delta G'|-1$. 
We arrange these sequences
one after another and get a sequence which fulfills
$k=k_1+k_2 \leq \frac{1}{2}|G_0\Delta G^*|-1+ \frac{1}{2} |G^* \Delta G'|-1
\leq\frac{1}{2}(|G\Delta G'|)-1$.

It remains to consider the case that such a cycle $C$ does not exist.
In other words, every alternating cycle in $G\Delta G'$ 
includes edges from $P$.

The alternating walk $P$ can be extended to an alternating cycle $C^* =
( v_1, v_2, v_3, v_4, v_5, \dots, v_{2t}, v_1)$, $t\geq 3$ using
only arcs from $G\Delta G'$. To construct $C^*$, start with $P$, and keep
adding alternating edges until you reach the start vertex $v_1$ for the
first time with an edge $\{v_i, v_1\} \in E(G')$.
Since the symmetric difference is Eulerian, you will not
get stuck before reaching $v_1$ with such an edge.
Note that $C^*$ must contain the edge $\{v_1,v_4\}$, as otherwise an
alternating cycle of type $C$ would exist. 
This also implies the existence of an alternating sub-walk $P_1 = (
v_4, v_5, \dots, v_6, v_4)$ of $C^*$ of odd length 
(at least of length 3), 
starting and ending with edges in $E(C^*)$. Likewise, there must be
another alternating sub-walk $P_2 = \{ v_1, v_7, \dots, v_8, v_1 \}$,
also of odd length (at least of length 3), starting and ending 
with edges in $E(C^*)$. The situation is visualized in
Figure~\ref{fig:proof-detail-case3}.

\begin{figure}[t]
\centerline{\includegraphics[height=2.5cm]{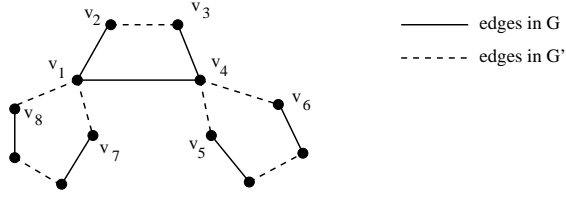}}
\caption{\label{fig:proof-detail-case3} Proof of 
Lemma~\ref{lemma:undirected-connectedness}: 
Edges of the symmetric difference $G \Delta G'$ in case 3.}
\end{figure}

In this scenario, we have $v_5 \neq v_7$, as otherwise $(E(P_1)
\setminus \{ \{v_7,v_1\} \}) \cup \{\{v_7=v_5, v_4\}, \{v_4,v_1\} \}$ would
be an alternating cycle of the form we have excluded above.  
We have four subcases with respect to the existence of edges between
$v_5$ and $v_7$. 

case a) $\{v_5,v_7\} \in E(G) \setminus E(G')$:\\
This would imply the existence of the alternating cycle $C =
(v_5,v_4,v_1,v_7,v_5)$, excluded above.

case b) $\{v_5,v_7\} \in E(G') \setminus E(G)$:\\ 
This would imply the existence of the alternating cycle
$C = (v_7, v_5, \dots, v_6, v_4, v_1, v_8, \dots, v_7)$, also excluded
above.

case c)  $\{v_5,v_7\} \in E(G) \cap E(G')$:\\
Then there is an alternating cycle $C' = (v_7,v_5,v_4,v_1,v_7)$ 
on which we can swap the edges in a single step. 
This leads to a realization $G^* = G_1$ with $|G_0\Delta G^*|= 4$
and $|G^* \Delta G'| = 2\ell$.  

case d) $\{v_5,v_7\} \not\in E(G)$ and $\{v_5,v_7\} \not\in E(G')$:\\
As in case c), we consider the alternating cycle $C' =
(v_7,v_5,P_1 \setminus \{ \{v_4,v_5 \}\}, \{ v_1,v_4\}, P_2 \setminus \{
\{v_7,v_1\}\},v_7)$. Swapping edges on $C'$, we get
a realization $G^* = G_{k-1}$, but
this time, $|G_0 \Delta G^*| = |C'| \leq 2 \ell$ and $|G^* \Delta G'| 
= | G \Delta G'| - (|C'| -1) +1  \leq 2\ell$.
According to the induction hypothesis there exist sequences
$G_0^{1}:=G,G_1^{1},\dots,G_{k_1}^1:=G^*$ and
$G_0^{2}:=G^*,G_1^{2},\dots,G_{k_2}^2=G'$ with $k_1\leq
\frac{1}{2}|G_0\Delta G^*|-1 = \frac{1}{2} (| G \Delta G'| - |C'| +2)
- 1$
and $k_2 \leq \frac{1}{2} |G^* \Delta G'|-1 = \frac{1}{2} |C'| -1$. 
We arrange these sequences
one after another and get a sequence which fulfills
$k=k_1 + k_2  \leq \frac{1}{2} (| G \Delta G'| - |C'| +2)
- 1 + \frac{1}{2} |C'| -1  = \frac{1}{2} | G \Delta G'| -1$.

\item[case 4:] Assume $\{v_1,v_4\}\notin E(G)\cup E(G')$.
This implies $\{v_1,v_4\}\notin G\Delta G'.$
It exists the alternating cycle $C:=(v_1, v_2, v_3, v_4, v_1))$ with $\{v_1,v_4\}\notin E(G).$ $G_1:=(G_0\setminus \{\{v_1,v_2\},\{v_3,v_4\}\})\cup \{\{v_3,v_2\},\{v_1,v_4\}\}$ is a realization of $S$ and it follows
$|G_0\Delta G_1|=4$ and $|G_1\Delta G'|=2\ell+2-2=2\ell.$ 
According to the induction hypothesis there exist realizations
$G_1,G_2,\dots,G_k$ with $G_k:=G'$  where  and $k \leq
\frac{1}{2}|G \Delta G'|-2.$ Hence, we get the sequence
$G_0,G_1,\dots,G_k$ with $k \leq \frac{1}{2}|G\Delta G'|-1$.
\end{enumerate}
\end{proof}

We have shown that 
the state graph $\Psi = (V_{\psi}, A_{\psi})$ is a $d$-regular,
symmetric, non-bipartite, and strongly connected digraph. Hence, the
corresponding Markov chain has the uniform distribution as its
stationary distribution.
A random walk on $\Psi = (V_{\psi}, A_{\psi})$ can be described by
Algorithm~\ref{alg:undirected}. 
This algorithm requires a data structure $DS$ containing all pairs of 
non-adjacent edges in $G$.\\

\begin{algorithm}[t]
    \caption{\label{alg:undirected}Switching Algorithm} 
    \begin{algorithmic}[1]
      \REQUIRE sequence $S$, an undirected graph $G=(V,E)$ with 
$d_{G}(v_i)= a_{i}  \text{ for all } i\in \{1,\dots,n\}$ and $v_i\in V$, a mixing time $\tau.$
      \ENSURE A sampled undirected graph $G'=(V,E')$ with $d_{G}(v_i)=
      a_{i} \text{ for all } i\in \{1,\dots,n\}$ and $v_i\in V.$
      \STATE $t:=0,~G':=G$ \COMMENT{initialization}
      \WHILE{$t<\tau$}
	\STATE Choose an element $p$ from $DS$ uniformly at random.\COMMENT{$p$ is a pair of non-adjacent edges.}
        \STATE Let $p$ be the pair of edges
        $\{v_{i_{1}},v_{i_{2}}\},\{v_{i_{3}},v_{i_{4}}\}$.
        \STATE Choose with probability $\frac{1}{2}$ between case a) and case b). 
\IF{case a)}
   \IF{$\{v_{i_{1}},v_{i_{4}}\},\{v_{i_{3}},v_{i_{2}}\}\notin E(G')$}
   \STATE\COMMENT{Either walk on to an adjacent realization}
		\STATE Delete $\{v_{i_{1}},v_{i_{2}}\},\{v_{i_{3}},v_{i_{4}}\}$ in $E(G').$
		\STATE Add $\{v_{i_{1}},v_{i_{4}}\},\{v_{i_{3}},v_{i_{2}}\}$ to $E(G').$
	   \ELSE \STATE \COMMENT{or walk a loop: `Do nothing'}
		
	   \ENDIF
\ELSE \STATE \COMMENT{ case b)}
\IF{$\{v_{i_{1}},v_{i_{3}}\},\{v_{i_{2}},v_{i_{4}}\}\notin E(G')$}
   \STATE\COMMENT{Either walk on to an adjacent realization}
		\STATE Delete $\{v_{i_{1}},v_{i_{2}}\},\{v_{i_{3}},v_{i_{4}}\}$ in $E(G').$
		\STATE Add $\{v_{i_{1}},v_{i_{3}}\},\{v_{i_{2}},v_{i_{4}}\}$ to $E(G').$
	   \ELSE \STATE \COMMENT{or walk a loop: `Do nothing'}
		
	   \ENDIF

         \ENDIF
	\STATE update data structure $DS$
	\STATE $t\leftarrow t+1$
      \ENDWHILE
       \end{algorithmic}
  \end{algorithm}  

\section{Sampling Digraphs}\label{sec:general-markov}

We now turn the directed case. As before, we start with the formal
problem definition and some additional notation. Then, we introduce our
Markov chain and analyze its properties.

\paragraph{Formal problem definition}

In the directed case,
we define a degree sequence $S$ as a sequence
 of $2$-tuples $\left( {a_1 \choose b_1}, {a_2 \choose b_2}, \dots, {a_n \choose b_n}\right)$ with $a_i,b_i \in \mathbb{Z}_0^+, i
 \in \{1,\dots,n\}$ where $a_i > 0$ or $b_i>0.$

Let $G=(V,A)$ be a
 directed labeled graph $G=(V,A)$ without loops and parallel arcs and
 $|V|=n$. We define the \emph{in-degree-function} $d_G^+:V\rightarrow
 \mathbb{Z}_0^+$ which assigns to each vertex $v_i\in V$ the number of
 incoming arcs and the \emph{out-degree-function} $d_G^-:V\rightarrow
 \mathbb{Z}_0^+$ which assigns to each vertex $v_i\in V$ the number of
 outgoing arcs.   We denote $S$ as
 \emph{graphical sequence} if and only if there exists at least one
 directed labeled graph $G=(V,A)$ without any loops or parallel arcs
 which satisfies
 $d_G^+(v_i)=b_i$ and $d_G^-(v_i)=a_i$ for all $v_{i}\in
 V$ and $i\in \{1,\dots, n\}.$ Any such graph $G$ is called
 \emph{realization} of $S$. Let $H$ be a subdigraph of $G.$ We say that
 $H=(V_H,A_H)$ is an \emph{induced subdigraph} of $G$ if every arc of $A$
 with both end vertices in $V_H$ is also in $A_H.$ We write
 $H=G\left\langle V_H \right \rangle.$

The symmetric difference
$G\Delta G'$ of two realizations $G \neq G'$ is defined analogously 
to the undirected case. 
Consider for example the realizations $G$ and $G'$ with
$A(G):=\{(v_1,v_2),(v_3,v_4)\}$ and $A(G'):=\{(v_1,v_4),(v_3,v_2)\}$
consisting of exactly two arcs. Then the symmetric difference is the
alternating directed 
$4$-cycle $C:=(v_1,v_2,v_3,v_4,v_1)$ where $(v_i,v_{i+1})\in
A(G)$ for $i\in \{1,3\}$ and $(v_{i+1},v_{i})\in A(G')$ taking indices
$i\bmod{4}.$ We define an \emph{alternating directed walk} $P$ for a
directed graph $G=(V,A)$ as a sequence $P:=(v_1,v_2,\dots, v_l)$ of
vertices $v_i\in V$ where either $(v_i,v_{i+1})\in A(G)$ and
$(v_{i},v_{i-1})\notin A(G)$  or $(v_i,v_{i+1})\notin A(G)$ and
$(v_{i},v_{i-1})\in A(G)$ for $i\bmod{2}=1.$ We call an even alternating
directed walk $C$ \emph{alternating directed cycle} if $v_1=v_l$ is
fulfilled. 
The symmetric difference of two realizations always decomposes into a
number of alternating directed cycles, see
Figs.~\ref{fig:example-symm.diff} and~\ref{fig:decomposition-alternating}.

\begin{figure}[t]
\centerline{\includegraphics[height=2.5cm]{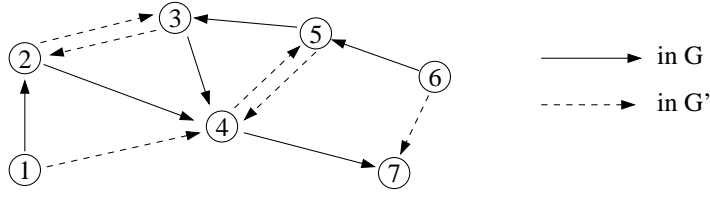}}
\caption{\label{fig:example-symm.diff}Example: Two realizations $G$ and $G'$.}
\end{figure}

\begin{figure}[t]
\centerline{\includegraphics[height=2.5cm]{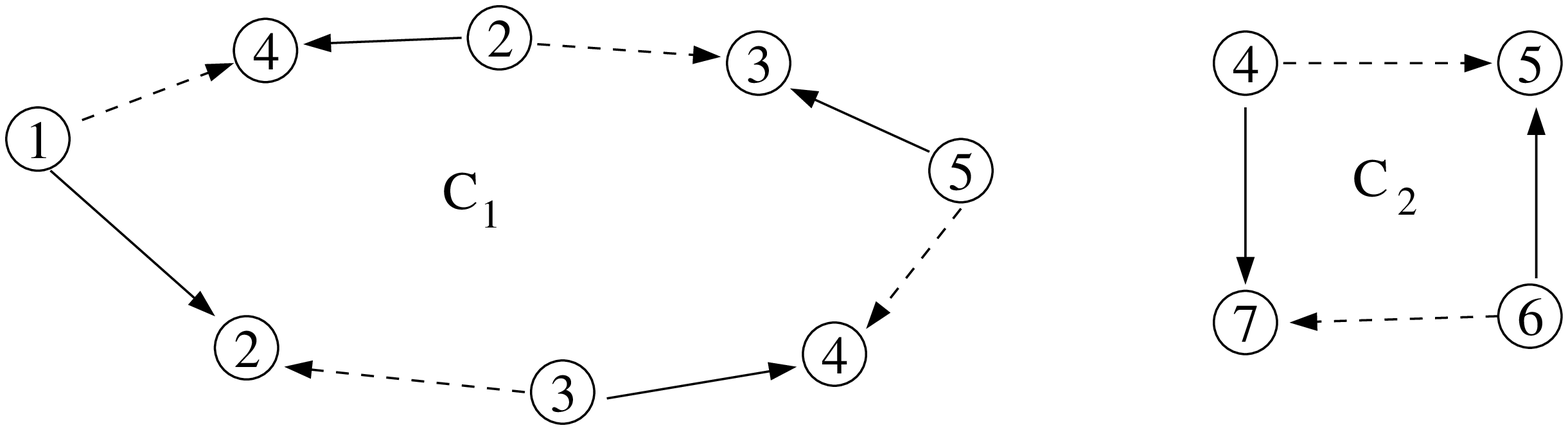}}
\caption{\label{fig:decomposition-alternating} Decomposition of the
  symmetric difference $G \Delta G'$ of
  Fig.~\ref{fig:example-symm.diff} into a minimum number of
  alternating directed cycles.}
\end{figure}

\paragraph{The Markov chain.}

In the directed case, 
we denote the state graph for our random walk by 
$\Phi = (V_{\phi}, A_{\phi})$. 
Its underlying vertex set $V_{\phi}$ is the set of
all realizations of a given degree sequence $S$.
For a realization $G$, we denote by $V_G$ the
corresponding vertex in $V_{\psi}$.
The arc set $A_{\psi}$ is defined as follows.
\begin{enumerate}
 \item[a)] We connect two vertices $V_G, V_{G'} \in V_{\phi}, G\neq G'$ with arcs $(V_G,V_{G'})$ and $(V_{G'},V_G)$ if and only if one of the two following constraints is fulfilled
\begin{enumerate}
 \item[1.] $|G\Delta G'|=4$ 
\item[2.] $|G\Delta G'|=6$ and $G\Delta G'$ contains exactly three different vertices.
\end{enumerate}
\item[b)] We set a directed loop $(V_G,V_G)$ 
\begin{enumerate}
\item[1.] for each pair of non-adjacent arcs $(v_{i_{1}},v_{i_{2}}),(v_{i_{3}},v_{i_{4}})\in A(G), i_{j}\in \{1,\dots,n\}$ if and only if $(v_{i_{1}},v_{i_{4}})\in A(G) \lor (v_{i_{3}},v_{i_{2}})\in A(G)$ in a realization $G,$
\item[2.] for each directed $2$-path $(v_{i_{1}},v_{i_{2}}),(v_{i_{2}},v_{i_{3}})\in A(G)$ if and only if one of the following constraints is true for a realization $G,$
\begin{enumerate}
 \item[i)] $(v_{i_{2}},v_{i_{1}})\in A(G) \lor (v_{i_{3}},v_{i_{2}})\in A(G) \lor (v_{i_{1}},v_{i_{3}})\in A(G)$, 
\item[ii)] $(v_{i_{3}},v_{i_{1}})\notin A(G)$,
\item[iii)] $i_{3}<i_{1}\lor i_{3}<i_{2}$.
\end{enumerate}
\item[3.] if $G$ contains no directed $2$-path.
\end{enumerate}
\end{enumerate}

\begin{Lemma}
 The state graph $\Phi:=(V_{\phi},A_{\phi})$ is non-bipartite, symmetric, and regular. 
\end{Lemma}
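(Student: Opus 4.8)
The statement to prove is that the directed state graph $\Phi=(V_\phi,A_\phi)$ is non-bipartite, symmetric, and regular. This is the directed analogue of Lemma~\ref{lemma:state-graph-undirected-basic-properties}, so my plan is to mirror that proof, treating the three properties in increasing order of difficulty. Non-bipartiteness is immediate: the loop construction in rule~b) attaches at least one directed loop to every vertex (in the worst case via b3), so every state has a closed walk of length one and the digraph is aperiodic. Symmetry is also essentially free from the definition: rule~a) introduces the arcs $(V_G,V_{G'})$ and $(V_{G'},V_G)$ simultaneously whenever the symmetric-difference condition (a $4$-cycle or a $3$-vertex $6$-cycle) holds, and that condition is manifestly symmetric in $G$ and $G'$ --- if $G\Delta G'$ is an alternating directed $4$- or $6$-cycle, so is $G'\Delta G$. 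Hence for every forward arc there is a matching backward arc, and loops are trivially symmetric.

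\textbf{The main work: regularity.} The real content, exactly as in the undirected case, is showing that every vertex $V_G$ has the \emph{same} out-degree, independent of $G$. The strategy is to account for out-arcs by the local configuration that generates them and argue that each such configuration is either paired with a non-loop transition or redirected into a loop, so that the total count collapses to a quantity depending only on the degree sequence $S$. Concretely, I would partition the potential $2$-swap sites --- pairs of non-adjacent arcs $(v_{i_1},v_{i_2}),(v_{i_3},v_{i_4})$ --- into those that produce a genuine $4$-cycle transition (rule a1) and those that cannot, the latter being absorbed as loops by rule~b1. Similarly I would partition the potential $3$-cycle-reorientation sites --- directed $2$-paths $(v_{i_1},v_{i_2}),(v_{i_2},v_{i_3})$ that close into an induced $3$-cycle --- into those yielding the $6$-cycle transition (rule a2) and those absorbed by rule b2 via conditions i)--iii). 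The conditions i)--iii) in b2 are exactly the complement of ``this ordered $2$-path is the canonical representative of a reorientable induced $3$-cycle'': condition ii) rules out the case where the third arc $(v_{i_3},v_{i_1})$ is absent (no $3$-cycle), condition i) rules out the presence of conflicting arcs that would create parallel arcs after reorientation, and condition iii) is an ordering tie-breaker ensuring each induced $3$-cycle is counted through exactly one of its three cyclic $2$-path representatives. The payoff is that the out-degree of $V_G$ equals a fixed count of arc-pairs plus a fixed count of $2$-paths, each corrected by its loop contributions, giving a constant such as $2\binom{m}{2}-2\sum_i(\text{arc-adjacency terms}) + (\text{$2$-path count})$ that depends only on $S$. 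Regularity then follows because symmetry forces in-degree to equal out-degree.

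\textbf{The expected obstacle.} I anticipate the delicate point is verifying the book-keeping for rule~b2, since the $6$-cycle reorientation operation is genuinely more intricate than the undirected $2$-swap. The three sub-conditions i)--iii) must be checked to be jointly necessary and sufficient for ``this $2$-path does \emph{not} initiate a legitimate reorientation transition counted elsewhere,'' and in particular condition iii)'s index comparison must guarantee that each reorientable induced $3$-cycle, which has three distinct $2$-path starting points $(v_{i_1},v_{i_2},v_{i_3})$, contributes exactly one non-loop out-arc rather than three. The danger is a miscount by a factor of three or an off-by-one in the overlap between the $2$-swap count and the reorientation count (a configuration being double-counted as both an arc-pair and a $2$-path). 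I would therefore isolate the induced-$3$-cycle contribution and check independently that, summed over all three rotations of each reorientable triangle, precisely one rotation fails all of i)--iii) and hence supplies the single transition arc while the other two become loops --- this is where I expect to spend the most care, and it is the step most likely to require a short case analysis rather than a one-line argument.
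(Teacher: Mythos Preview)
Your plan is correct and mirrors the paper's proof almost exactly: symmetry is read off from rule~a), and regularity is obtained by showing that each pair of non-adjacent arcs and each directed $2$-path contributes exactly one out-arc (either a genuine transition or a loop absorbed by b1)/b2)), so the out-degree equals the constant $\binom{|A|}{2}-\sum_i\binom{a_i}{2}-\sum_i\binom{b_i}{2}-\sum_i a_ib_i+\sum_i a_ib_i$, depending only on $S$; your reading of conditions i)--iii) as the exact complement of ``canonical $2$-path of a reorientable induced $3$-cycle'' is precisely how the paper argues. The one place to be a little more careful than your sketch is non-bipartiteness: rule b3) fires \emph{only} when $G$ has no directed $2$-path, so in the generic case you must argue (as the paper does) that some $2$-path triggers i), ii), or iii) --- any $2$-path not closing to a $3$-cycle triggers ii), and any $3$-cycle has two of its three $2$-paths triggering iii) --- hence a loop is always present, but not ``via b3'' as your parenthetical suggests.
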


\begin{proof}
 In our setting we connect two vertices at each time in both
 directions. Hence, $\Phi$ is symmetric. Furthermore, if some
 realization  $G$ contains no directed $2$-path, then each $G$ is a
 realization of a sequence~$S$, only consisting of sinks and
 sources. With our setting $\Phi$ contains for each $V_G\in V_{\phi}$ a
 directed loop and is therefore non-bipartite, see item $b) 3$ in our
 construction. Let us now assume that a realization $G$ contains a
 directed $2$-path. Either there exists a third arc which completes
 these two arcs to a directed $3$-cycle or not. In all cases we can
 guarantee one directed loop at $V_G:$ In the case of a directed
 $3$-cycle $C$ we distinguish two cases. Either $b)2.i)$ is fulfilled
 or in $C$ there exists a $2$-path with conditions as in $b)
 2.iii)$. If we have a $2$-path which is not a subpath of a directed
 $3$-cycle then we get condition $b)2.ii).$ Hence, $\Phi$ is not
 bipartite. For the proof of regularity, note, that we consider at
 each vertex $V_G$ the number of pairs of non-adjacent arcs in a
 realization $G.$ This is the number of all possible arc pairs minus
 the number of adjacent arcs ${|A(G)|\choose 2}-(\sum_{i=1}^{n}{a_i
   \choose 2}+\sum_{i=1}^{n}{b_i \choose 2}+\sum_{i=1}^{n}a_ib_i)$
 where $\sum_{i=1}^{n}{a_i \choose 2}$ is the number of all incoming
 arc pairs at each vertex, $\sum_{i=1}^{n}{b_i \choose 2}$ is the
 number of all outgoing arc pairs at each vertex and
 $\sum_{i=1}^{n}a_ib_i$ is the number of directed $2$-paths in a
 realization $G.$ Hence, the number of non-adjacent arcs is a constant
 value for each realization $G.$ For each of these arc pairs we either
 set a directed loop or an incoming and an outgoing arc at each vertex
 $V_G\in V_{\phi}$. For each $2$-path in $G$ we set a loop if it is not
 part of a directed $3$-cycle
 $C=(v_{i_{1}},v_{i_{2}},v_{i_{3}},v_{i_{1}})$ which is an induced
 subdigraph $C=G\left\langle
   \{v_{i_{1}},v_{i_{2}},v_{i_{3}}\}\right\rangle$. If it is the
 case it exists a realization $G'$ with $|G\Delta G'|=6$ and $G\Delta
 G'$ contains exactly $3$ different vertices. Hence, we set for the
 $2$-path in $C$ with $i_j<i_{j'}$ and $i_{j'}<i_{j''}$ with $j,j',j''
 \in \{1,2,3\}$ the directed arcs $(V_G,V_{G^{'}})$ and $(V_{G'},V_G)$
 and for both other $2$-paths in $C$ a directed loop. Generally, we
 set for all $2$-paths in a realization an incoming and an outgoing
 arc at each $V_G.$ The number of $2$-paths in each realization is the
 constant value $\sum_{i=1}^{n}a_i b_i.$ Hence, the vertex degree at
 each vertex is 
$d_{\Phi}:=d^+_{\Phi} = d^-_{\Phi}= {|A(G)|\choose 2}-2\sum_{i=1}^{n}{a_i \choose 2}.$
\end{proof}

In the next section we have to prove that our constructed graphs are strongly connected. This is sufficient to prove the reachability of each realization independent of the starting realization. 
Fig.~\ref{fig:swap-sequence-example} shows an example how the realization
$G$ from Fig.~\ref{fig:example-symm.diff}
can be transformed to the realization $G'$ by a sequence of swap operations.

\begin{figure}[t]
\centerline{\includegraphics[height=9cm]{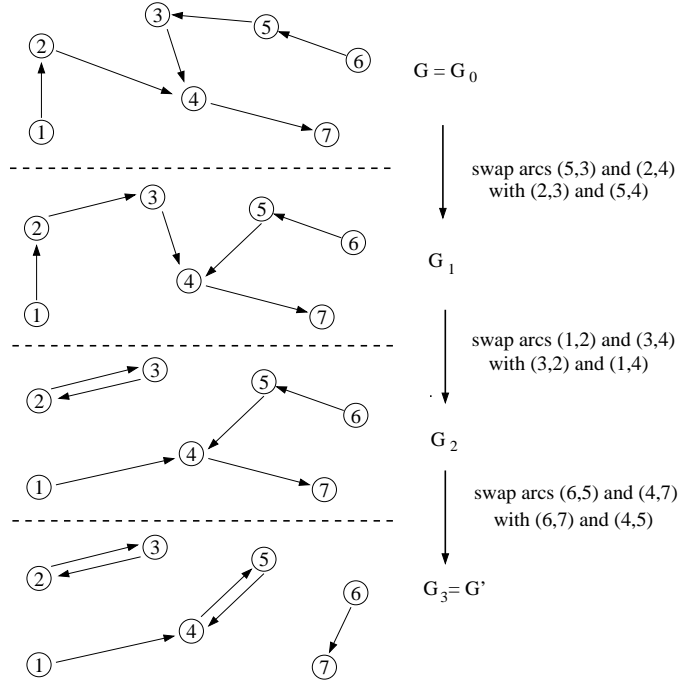}}
\caption{\label{fig:swap-sequence-example} Transforming $G$ from
  Fig.~\ref{fig:example-symm.diff} into $G'$ by a sequence of swap operations.}
\end{figure}

\subsection{Symmetric differences of two different realizations}

\begin{Proposition}\label{PR2}
Let $S$ be a graphical sequence and $G$ and $G'$ be two different
realizations. If $G\Delta G'$ is exactly one weak component and
$|G\Delta G'|\neq 6$ then there exists in $G\Delta G'$ a
vertex-disjoint alternating 3-walk of type $P$ or $Q$, where 
$P=(v_1,v_2,v_3,v_4)$ with
$(v_1,v_2),(v_3,v_4)\in A(G)$ and $(v_3,v_2)\in A(G')$
and  $Q=(w_1,w_2,w_3,w_4)$ with
$(w_1,w_2),(w_3,w_4)\in A(G')$ and $(w_3,w_2)\in A(G)$.
\end{Proposition}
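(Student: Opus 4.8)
The plan is to mirror Proposition~\ref{prop:3-path}, but to drive the argument with the \emph{directed} balance of the symmetric difference. Since $G$ and $G'$ realize the same sequence, at every vertex $v$ the number of arcs of $A(G)\setminus A(G')$ entering $v$ equals the number of arcs of $A(G')\setminus A(G)$ entering $v$, and the same holds for the outgoing arcs. Hence in $D:=G\Delta G'$ every vertex with at least one incoming arc receives both a $G$-arc and a $G'$-arc, and every vertex with at least one outgoing arc emits both a $G$-arc and a $G'$-arc. The base case $|D|=4$ is then immediate: $D$ is a single alternating directed $4$-cycle on four distinct vertices, three of whose arcs already form a walk of type $P$.

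First I would try to read off a type-$P$ walk directly. Choose any vertex $v_3$ that emits an arc of $D$; by the balance observation it emits a $G'$-arc $(v_3,v_2)$ and a $G$-arc $(v_3,v_4)$. As $v_2$ now has an incoming arc, it also receives a $G$-arc $(v_1,v_2)$, so $P=(v_1,v_2,v_3,v_4)$ has exactly the required membership pattern. A routine check --- using that any arc lying in both $G$ and $G'$ would be a common arc and hence outside $D$ --- gives $v_1\neq v_3$ and $v_2\neq v_4$, together with the trivial $v_1\neq v_2$, $v_2\neq v_3$, $v_3\neq v_4$; the only coincidence that can spoil vertex-disjointness is $v_1=v_4$.

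Resolving this clash is the crux. When $v_1=v_4$ we have $(v_3,v_1),(v_1,v_2)\in A(G)$ and $(v_3,v_2)\in A(G')$. Since $v_1$ emits the $G$-arc $(v_1,v_2)$, balance yields a $G'$-arc $(v_1,v_5)$, and then $Q=(v_3,v_2,v_1,v_5)$ is a walk of type $Q$; its four vertices are distinct unless $v_5=v_3$. The main obstacle is the remaining sub-case $v_5=v_3$, in which all arcs seen so far live on $\{v_1,v_2,v_3\}$. Here I would invoke balance once more: $v_1$ has an incoming $G$-arc but, so far, no incoming $G'$-arc, which forces a $G'$-arc into $v_1$. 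Either this arc originates at a fourth vertex, on which the walk can then be completed exactly as Proposition~\ref{prop:3-path} extends $(v_1,v_2,v_3,v_4)$ to $(v_3,v_1,v_4,v_5)$, or all such forced arcs stay within $\{v_1,v_2,v_3\}$, in which case $D$ closes up into the fully reoriented induced directed $3$-cycle on these three vertices with $|D|=6$. The latter is excluded by the two hypotheses $|D|\neq 6$ and that $D$ is a single weak component (the connectivity also rules out sidestepping the clash by jumping to another component). I expect this collapse analysis, with its several orientation sub-cases paralleling case~3 of Lemma~\ref{lemma:undirected-connectedness}, to be the bulk of the work, while the freedom to produce \emph{either} a type-$P$ or a type-$Q$ walk is what makes the clash resolvable at all.
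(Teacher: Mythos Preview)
Your balance-chasing approach is sound and genuinely different from the paper's. The paper decomposes $D=G\Delta G'$ into a \emph{minimum} number of alternating directed cycles, observes that if no vertex-disjoint $3$-walk of type $P$ or $Q$ occurs then every third vertex along each cycle must repeat, forcing each cycle to be the $6$-cycle on three vertices, and then shows that two such cycles sharing a vertex (guaranteed by $|D|\neq 6$ together with weak connectivity) can be merged into a single alternating cycle, contradicting minimality. Your route is more constructive and more explicit, at the price of the orientation case-analysis you anticipate; the paper's is shorter but leans on the global decomposition.

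There is, however, one real slip in your collapse analysis. In the sub-case $v_1=v_4$, $v_5=v_3$, you assert that if every balance-forced arc stays inside $\{v_1,v_2,v_3\}$ then ``$D$ closes up \dots\ with $|D|=6$''. Balance only forces the six arcs of the reoriented triangle to lie in $D$; it does not forbid further arcs at $v_1,v_2,v_3$, so you cannot conclude $|D|=6$. What the hypotheses $|D|\neq 6$ and weak connectivity actually give you is an additional arc of $D$ joining some $v_i$ to a vertex $w\notin\{v_1,v_2,v_3\}$, and this extra arc together with two of the six triangle arcs yields the desired walk immediately (e.g.\ $(v_1,w)\in A(G)\setminus A(G')$ gives $P=(v_2,v_3,v_1,w)$; the other three colour/orientation possibilities are symmetric). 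So the repair is one line, but as written the inference ``$|D|=6$'' is unwarranted and should be replaced by this last step.
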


\begin{proof} Note that in $G\Delta G'$ an alternating cycle of length
  two is not possible. Otherwise, there exists an arc $(u,v)\in
  A(G)\cap A(G')$ in contradiction to our assumption that $(u,v)\in
  G\Delta G'.$
The symmetric difference $G\Delta G'$ may decompose into a number of 
alternating cycles $(G \Delta G')_i$. We consider a decomposition into
the minimum number of such cycles. 
If one of these alternating cycles $(G \Delta G')_i$ 
contains a vertex-disjoint alternating $3$--walk $P$ or $Q$ as
claimed, we are done.
Otherwise, each vertex is repeated at each third
step in $(G\Delta G')_i.$ Hence, we get the alternating cycles
$(G\Delta
G')_i:=(v_{i_1},v_{i_2},v_{i_3},v_{i_1},v_{i_2},v_{i_3},v_{i_1})$ 
where
$(v_{i_1},v_{i_2}),(v_{i_2},v_{i_3}),(v_{i_3},v_{i_1})\in A(G)$ and
$(v_{i_2},v_{i_1}),(v_{i_3},v_{i_2}),(v_{i_1},v_{i_3})\in A(G').$ 
The cycle cannot be longer, as the graph induced by $G \Delta
G'\langle \{ v_1,v_2,v_3 \} \rangle$ is already complete.
Since $|(G\Delta G')_i| = 6$, there must be 
$(G\Delta G')_j$ 
with
$i \neq j$.
$(G\Delta G')_j$ shares at least one vertex with $(G\Delta G')_i$,
because $G \Delta G'$ is
weakly connected.
There must be exactly one $v_{i_1} = v_{j_1}$, since otherwise these
two cycles were not arc-disjoint.
The union of these two cycles is an alternating cycle, in
contradiction to the minimality of the decomposition.
\end{proof}

Note that the above proposition does not assert that
the symmetric difference contains $P$ and $Q$.
The smallest counter-example are the realizations
$G=(V,A)$ and $G'=(V,A')$ with $V = \{ v_1,v_2,v_3,v_4\}$ and
$A = \{ (v_1,v_3), (v_3,v_2), (v_2,v_4), (v_4,v_1) \}$ and
$A' = \{ (v_1,v_2), (v_2,v_1), (v_3,v_4), (v_4,v_3) \}$.

\begin{Proposition}\label{PR1} Let $S$ be a graphical sequence and $G$ and $G'$ be two different realizations. If  $|G\Delta G'|=6$, then there exist
\begin{enumerate}
\item[a)] realizations $G_0,G_1,G_2$ with $G_0:=G$, $G_2:=G'$ and 
          $|G_i\Delta G_{i+1}|=4$ for $i\in\{0,1\}$ or
\item[b)] $G$ and $G'$ are different in the orientation of exactly one directed $3$-cycle.
\end{enumerate}
\end{Proposition}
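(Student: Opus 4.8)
The plan is to bring $G\Delta G'$ into a rigid normal form and then decide between cases a) and b) according to how degenerate that form is. Since the symmetric difference of two realizations decomposes into alternating directed cycles, each of even length and (as noted in the proof of Proposition~\ref{PR2}) of length at least four, the only way six arcs can occur is as a single alternating directed $6$-cycle. I would write it as $(v_1,v_2,v_3,v_4,v_5,v_6,v_1)$ with the three $G$-arcs $(v_1,v_2),(v_3,v_4),(v_5,v_6)$ on the odd steps and the three $G'$-arcs $(v_3,v_2),(v_5,v_4),(v_1,v_6)$ on the even steps. A first observation is that the three tails $v_1,v_3,v_5$ are pairwise distinct and the three heads $v_2,v_4,v_6$ are pairwise distinct: for instance $v_1=v_3$ would make the $G$-arc $(v_1,v_2)$ equal to the $G'$-arc $(v_3,v_2)$, which is impossible as one lies in $G\setminus G'$ and the other in $G'\setminus G$; the remaining equalities follow from the cyclic symmetry of the normal form.

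Running the same argument over all tail/head pairs shows that the \emph{only} coincidences not immediately excluded are the three ``long diagonals'' $v_1=v_4$, $v_3=v_6$, $v_5=v_2$; any other identification either equates a $G$-arc with a $G'$-arc or creates a loop. This reduces the proposition to a dichotomy on these diagonals. If all three collapse, substituting $v_4=v_1$, $v_6=v_3$, $v_5=v_2$ back into the arc lists shows that the $G$-arcs form the directed $3$-cycle $v_1\to v_2\to v_3\to v_1$ and the $G'$-arcs form its reversal, so $G$ and $G'$ differ exactly in the orientation of one directed $3$-cycle, which is case b).

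In the remaining case at least one diagonal is non-degenerate, and by symmetry I may assume $v_1\neq v_4$; I claim this alone forces case a). The direct route is to first swap the $G$-arcs $(v_1,v_2),(v_3,v_4)$ into $(v_1,v_4),(v_3,v_2)$: the four endpoints $v_1,v_2,v_3,v_4$ are distinct by the tail/head analysis and $(v_3,v_2)\notin A(G)$, so this is a legal $2$-swap \emph{provided} the chord $(v_1,v_4)$ is not already an arc of $G$, and a routine count shows that the resulting $G_1$ satisfies $|G\Delta G_1|=4$ while $G_1\Delta G'$ is an alternating $4$-cycle, removed by one more swap. The one situation in which this fails is when $(v_1,v_4)$ is already present; but $(v_1,v_4)$ is none of the six arcs of the normal form, hence it is \emph{common} to $G$ and $G'$, and then it becomes an enabler rather than an obstruction: swapping the common arc $(v_1,v_4)$ with the opposite $G$-arc $(v_5,v_6)$ produces the two $G'$-arcs $(v_1,v_6)$ and $(v_5,v_4)$ simultaneously. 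This swap is legal because $v_1,v_4,v_5,v_6$ are pairwise distinct, and it again leaves a symmetric difference of size four with both $G$ and $G'$. In either subcase $G$ reaches $G'$ in two $2$-swaps.

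I expect the main obstacle to be twofold. First, one must nail down the coincidence structure so precisely that the split ``all three diagonals collapse'' versus ``at least one survives'' is genuinely exhaustive and that all four endpoints of every swap are distinct; this is exactly what makes the normal form carry the argument. Second, one must notice that an already-present chord cannot block the construction, since it is forced to be a common arc and can itself be swapped out — this observation is what keeps the $|G\Delta G'|=6$ case from requiring any operation beyond ordinary $2$-swaps.
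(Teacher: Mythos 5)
Your proof is correct and follows essentially the same route as the paper: reduce to a single alternating directed $6$-cycle, separate the case of exactly three distinct vertices (case b)) from the rest, and realize case a) by two explicit $2$-swaps with a subcase on whether the chord $(v_1,v_4)$ is already present (in which case it is common to $G$ and $G'$ and can itself be swapped). Your version is in fact slightly tighter than the paper's, since the diagonal analysis handles four, five, and six distinct vertices uniformly where the paper writes out only the four-vertex subcase and declares the others analogous.
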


\begin{proof} First observe that the symmetric difference is weakly
  connected whenever $|G\Delta G'|=6$. 
We consider the alternating $6$-cycle $C:=G\Delta G'.$
\begin{itemize}
 \item[case 1:] $C$ contains at least four different vertices.
Assume first that $C$ contains four different vertices. The only possibility to realize this scenario is $C=(v_1,v_2,v_3,v_1,v_4,v_3,v_1)$ with $(v_1,v_2),(v_3,v_1),(v_4,v_3)\in A(G)$ and $(v_3,v_2),(v_4,v_1),(v_1,v_3)\in A(G')$. (A permutation of $\{1,2,3\}$ does not influence the result.) We get the alternating vertex-disjoint walk $P=(v_4,v_3,v_1,v_2).$ 
\begin{itemize}
\item[(i):] Assume $(v_4,v_2)\notin A(G).$ It follows $(v_4,v_2)\notin A(G').$ Otherwise, we would get $(v_4,v_2)\in G\Delta G'$ in contradiction to our assumption. We set
$$G_1:=(G_0\setminus \{(v_4,v_3),(v_1,v_2)\})\cup \{(v_4,v_2),(v_1,v_3)\}~\textnormal{and}$$
$$G_2:=(G_1\setminus \{(v_4,v_2),(v_3,v_1)\})\cup \{(v_4,v_1),(v_3,v_2)\}.$$
We get $G_2=G'$ and realizations $G_0,G_1,G_2$ with $|G_i\Delta G'_{i+1}|=4.$
\item[(ii):] Assume $(v_4,v_2)\in A(G).$ It follows $(v_4,v_2)\in A(G').$ Otherwise, we would get $(v_4,v_2)\in G\Delta G'$ in contradiction to our assumption. We set
$$G_1:=(G_0\setminus \{(v_4,v_2),(v_3,v_1)\})\cup \{(v_4,v_1),(v_3,v_2)\}~\textnormal{and}$$
$$G_2:=(G_1\setminus \{(v_4,v_3),(v_1,v_2)\})\cup \{(v_4,v_2),(v_1,v_3)\}.$$
We get $G_2=G'$ and realizations $G_0,G_1,G_2$ with $|G_i\Delta G'_{i+1}|=4.$
\end{itemize}
We can argue analogously if $C$ contains five our six different vertices.
\item[case 2:] $C$ contains exactly three different vertices. Then $C$
  is the alternating cycle $C=(v_1,v_2,v_3,v_1,v_2,v_3,v_1)$ with
  $(v_1,v_2),(v_2,v_3),(v_3,v_1) \in A(G)$ and
  $(v_3,v_2),(v_2,v_1),(v_1,v_3) \in A(G').$ Hence, $G$ and $G'$ are different in the orientation of exactly one directed $3$-cycle.
\end{itemize}
\end{proof}


\begin{Lemma}\label{TH1} Let $S$ be a graphical sequence and $G$ and
  $G'$ be two different realizations. There exist 
realizations $G_0,G_1,\dots,G_k$ with $G_0:=G$, $G_k:=G'$ and 
\begin{enumerate}
 \item $|G_i\Delta G_{i+1}|=4$ or
\item $|G_i\Delta G_{i+1}|=6$
\end{enumerate}
where $k\leq \frac{1}{2}|G\Delta G'|-1.$ In case $(2),$ $G_i\Delta G_{i+1}$ consists of a directed $3$-cycle and its opposite orientation.
\end{Lemma}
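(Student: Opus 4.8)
The plan is to prove Lemma~\ref{TH1} by induction on $\kappa := \frac{1}{2}|G\Delta G'|$, mirroring the structure of the undirected proof in Lemma~\ref{lemma:undirected-connectedness} but now carrying along the two auxiliary results specific to the directed setting, namely Propositions~\ref{PR2} and~\ref{PR1}. The base cases are $\kappa = 2$ and $\kappa = 3$. For $|G\Delta G'| = 4$ the symmetric difference is a single alternating directed $4$-cycle, so $G_1 := G'$ works with one step. For $|G\Delta G'| = 6$ we invoke Proposition~\ref{PR1} directly: either we get an intermediate $G_1$ with two $4$-swaps (case a), giving $k = 2 = \frac{1}{2}\cdot 6 - 1$), or $G$ and $G'$ differ by exactly one $3$-cycle reorientation (case b), which is precisely a single type-(2) transition with $G\Delta G'$ equal to a directed $3$-cycle and its reverse, so $k = 1 \le \frac{1}{2}\cdot 6 - 1$. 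This base case handling is why Proposition~\ref{PR1} was isolated beforehand.

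For the inductive step, assume the claim for all pairs of realizations with symmetric difference of size at most $2\ell$, and suppose $|G\Delta G'| = 2\ell + 2 \ge 8$. First I would reduce to the case where $G\Delta G'$ is a single weak component: if it decomposes into several weak components, I pick one component, treat it as the symmetric difference of $G$ and an intermediate realization $\hat G$ obtained by performing the $G'$-edges only on that component, and concatenate the sequences for the two strictly smaller pieces, with the step counts adding up as in the undirected argument. Once reduced to one weak component of size $2\ell+2 \ge 8 \neq 6$, Proposition~\ref{PR2} guarantees a vertex-disjoint alternating $3$-walk of type $P$ or $Q$; by the symmetry between $G$ and $G'$ (swapping their roles turns $Q$ into $P$) I may assume we have $P = (v_1,v_2,v_3,v_4)$ with $(v_1,v_2),(v_3,v_4)\in A(G)$ and $(v_3,v_2)\in A(G')$.

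The heart of the proof is then a case analysis on the status of the closing arc $(v_1,v_4)$ (or more precisely the arc that would complete $P$ into a $4$-cycle), exactly paralleling the four cases of Lemma~\ref{lemma:undirected-connectedness}: (1) the closing arc lies in $A(G')\setminus A(G)$, so a single $4$-swap replaces $(v_1,v_2),(v_3,v_4)$ by $(v_3,v_2),(v_1,v_4)$ and reduces the symmetric difference by $4$; (2) it lies in $A(G)\cap A(G')$, so $P$ sits inside a longer alternating cycle which we shortcut; (3) it lies in $A(G)\setminus A(G')$; and (4) it is absent from both, giving a clean alternating $4$-cycle $C = (v_1,v_2,v_3,v_4,v_1)$ on which we swap in one step. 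In each case the induction hypothesis is applied to the resulting strictly smaller symmetric difference(s), and the step counts are verified to satisfy $k \le \frac{1}{2}|G\Delta G'| - 1$ by the same arithmetic bookkeeping as in the undirected lemma. The main obstacle I anticipate is case (3), the analogue of the delicate undirected case~3: here the excluded-cycle argument must be redone with directed arcs, and one must be careful that the subwalks $P_1, P_2$ and the sub-case analysis on the connecting arc between the interior vertices remain valid when orientations are fixed, and in particular that whenever a short alternating cycle of length $6$ with three vertices appears it is legitimately handled by a type-(2) $3$-cycle reorientation transition rather than a $4$-swap. I would also take care that directed alternating $2$-cycles never arise (as noted at the start of Proposition~\ref{PR2}'s proof, they would force an arc in $A(G)\cap A(G')$), so every cycle encountered has length at least $4$ and the reductions are genuine.
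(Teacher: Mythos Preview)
Your proposal is correct and follows essentially the same inductive scheme as the paper: base cases via Proposition~\ref{PR1}, then Proposition~\ref{PR2} to locate the $3$-walk $P$, then the four-way split on the status of $(v_1,v_4)$. Your treatment of multiple weak components (pick one, pass to the intermediate realization $\hat G$, induct on both pieces) is slightly different from the paper's---which instead case-splits on whether some component is a $6$-cycle on three vertices, a $6$-cycle on more vertices, or has size $\ge 8$---but both routes work and yield the same bound.

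The one place where you overestimate the difficulty is case~3. In the directed setting this case is \emph{simpler} than its undirected analogue, not harder. Extend $P$ to an alternating cycle $C$ by walking through $G\Delta G'$ until you first return to $v_1$. Because of orientation, the arc $(v_1,v_4)\in A(G)$ cannot be the closing arc of $C$ (the last arc must enter $v_1$ and belong to $A(G')$), and it cannot appear anywhere else since $v_1$ occurs only at the endpoints of $C$. Hence $(v_1,v_4)\notin C$, and $C^{*} := (C\setminus P)\cup\{(v_1,v_4)\}$ is an alternating cycle lying entirely in $G\Delta G'$. Swapping $C^{*}$ gives a realization $G^{*}$ with $|G\Delta G^{*}|=|C^{*}|\le 2\ell$ and $|G^{*}\Delta G'|=|G\Delta G'|-|C^{*}|\le 2\ell$, so the induction hypothesis applies to both halves. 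The $P_1,P_2$ subwalk machinery and the four subcases on $\{v_5,v_7\}$ from the undirected case~3 are not needed here.
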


\begin{proof} 
We prove the lemma by induction according to the cardinality of the
symmetric difference $|G\Delta G'|=2\kappa.$ For $\kappa:=2$ we get
$|G\Delta G'|=4.$ The correctness of our claim follows with
$G_1:=G'$. For $\kappa:=3$ we get a sequence of realizations
$G_0,G_1,G_2$ with case $a)$ of Proposition \ref{PR1}. In case $b)$ we
get a directed $3$-cycle with its opposite orientation. In both cases
it follows $k\leq 2.$ 

We assume the correctness of our claim for all $\kappa\leq \ell.$ Let
$|G\Delta G'|=2\ell+2.$ We can assume that $\kappa>3.$ 
Assume further, that the symmetric difference consists of $k$ weakly
connected components $(G \Delta G')_i$ for $i \in \{1, \dots, k\}$.

Consider first the case that
for all these components  $|(G \Delta G')_i| = 6$ and that each
component contains exactly three distinct vertices, then each of them
is a directed 3-cycle and its reorientation. We choose  $(G \Delta
G')_1$, perform a 3-cycle reorientation on it, and obtain realization
$G^*$. Thus $|G^* \Delta G'| = 2\ell -4$. By the induction hypothesis,
there are realizations $G_0 = G^*, G_1, \dots, G_k = G'$ such that
$ k \leq \frac{1}{2}|G^*\Delta G'|-1 < \frac{1}{2}|G\Delta G'|-1$. 
Combining the first 
3-cycle reorientation with this sequence of realizations gives the
desired bound. 
If there is  a component $|(G \Delta G')_i| = 6$ with at least
four distinct vertices, we can apply Proposition~\ref{PR1}, case a)
to it and handle the remaining components by induction.

Otherwise, there is a component with $|(G \Delta G')_i| \geq 8$. 
Due to Proposition~\ref{PR2}, we may assume that 
there is a vertex-disjoint walk
$P=(v_1,v_2,v_3,v_4)$ with $(v_1,v_2),(v_3,v_4)\in A(G)$ and
$(v_3,v_2)\in A(G').$ Otherwise, there exists 
$Q=(w_1,w_2,w_3,w_4)$ 
with $(w_1,w_2),(w_3,w_4)\in A(G')$ and $(w_3,w_2)\in A(G)$.
In that case we can exchange the roles of $G$ and $G'$ 
and consider $G' \Delta G$. Clearly, a sequence of realizations
$G' = G'_0, G'_1, \dots, G'_k = G$ can be reversed
and then fulfills the conditions of the lemma.
So from now on we work with $P$.
 
\begin{enumerate}
 \item[case 1:] Assume $(v_1,v_4)\in A(G')\setminus A(G).$\\
This implies $(v_1,v_4)\in G\Delta G'$.
$G_1:=(G_0\setminus \{(v_1,v_2),(v_3,v_4)\})\cup \{(v_3,v_2),(v_1,v_4)\}$ is a realization of $S$ and it follows
$|G_0\Delta G_1|=4$ and $|G_1\Delta G'|=2\ell+2-4=2(\ell-1).$ 
Note that after this step, $G_1\Delta G'$ may consist of several
connected components, but each of them has strictly smaller
cardinality. Therefore, we can apply the induction hypothesis on $|G_1\Delta G'|$. 
Thus, we obtain realizations $G_1,G_2,\dots,G_k$ with $G_k:=G'$ and $|G_i \Delta G_{i+1}|=4$ or $|G_i \Delta G_{i+1}|=6$ where $k-1\leq \frac{1}{2}|G_1\Delta G'|-1.$ Hence, we get the sequence $G_0,G_1,\dots,G_k$ with $k= 1+\frac{1}{2}|G_1\Delta G'|-1=\frac{1}{2}(|G\Delta G'|-4)\leq \frac{1}{2}|G\Delta G'|-1$ which fulfills $1.$ and $2.$
\item[case 2:] Assume $(v_1,v_4)\in A(G)\cap A(G').$\\
This implies $(v_1,v_4)\notin G\Delta G'.$
Consider an alternating cycle $C=(v_4,v_i,\dots,v_j,$
$v_1,v_2,v_3,v_4)$ of $(G \Delta G')_i$ such that
each vertex has in-degree two or out-degree two. 
Then $P$ is an alternating subpath of
$C$ with $(v_i,v_4),(v_1,v_j)\in A(G')$. We construct a new
alternating cycle $C^*:=(C\setminus P)\cup \{(v_1,v_4)\}$
with length $|C^*|=|C|-2.$ 
We swap the arcs in $C^*$ and get a realization $G^*$ of S with
$|G_0\Delta G^*|=|C^*|\leq 2\ell$ and $|G^*\Delta G'|=|G \Delta
G'|-(|C^*|-1)+1\leq 2\ell.$ 
According to the induction hypothesis there exist sequences
$G_0^{1}:=G,G_1^{1},\dots,G_{k_1}^1:=G^*$ and
$G_0^{2}:=G^*,G_1^{2},\dots,G_{k_2}^2=G'$ with $k_1\leq
\frac{1}{2}|G_0\Delta G^*|-1$ and $k_2\leq \frac{1}{2} | G^* \Delta
G'| -1$. We arrange these sequences one after another and get a
sequence which fulfills $1.$ and $2.$ and $k=k_1+k_2=\frac{1}{2}|G_0
\Delta G^*|-1 + \frac{1}{2} | G^* \Delta G'| -1
=\frac{1}{2}|G \Delta G'|-1$.
\item[case 3:] Assume $(v_1,v_4)\in A(G)\setminus A(G').$\\
This implies $(v_1,v_4)\in G\Delta G'.$
The alternating walk $P$ can be extended to an alternating cycle $C =
\{ v_1, v_2, v_3, v_4, v_5, \dots, v_{2t}, v_1\}, t\geq 3$ using
only arcs from $G\Delta G'$. To construct $C$, start with $P$, and keep
adding alternating arcs until you reach the start vertex $v_1$ for the
first time. Obviously, you will not
get stuck before reaching $v_1$. 
Note that the arc $(v_1,v_4)$
does not belong to $C$. Therefore, there exists 
an alternating sub-cycle $C^*:=C \cup \{ (v_1,v_4)\} \setminus P$ 
formed by arcs in $G\Delta G'$. 
We swap the arcs in $C^*$ and get a realization $G^*$ of S with 
$|G_0\Delta G^*|=|C^*|\leq 2\ell$ and $|G^*\Delta G'|=|G \Delta
G'|-|C^*|\leq 2\ell$. According to the induction hypothesis there
exist sequences $G_0^{1}:=G,G_1^{1},\dots,G_{k_1}^1:=G^*$ and
$G_0^{2}:=G^*,G_1^{2},\dots,G_{k_2}^2=G'$ with $k_1\leq
\frac{1}{2}|G_0\Delta G^*|-1$ and $k_2\leq \frac{1}{2} (|G \Delta G'|
- |C^*|)-1 $. We arrange these sequences
one after another and get a sequence which fulfills $1.$ and $2.$ 
and $k=k_1+k_2=\frac{1}{2}|G_0\Delta G^*|-1+ \frac{1}{2} (|G \Delta G'|
- |C^*|)-1 =\frac{1}{2}(|G\Delta G'|) -2$.

\item[case 4:] Assume $(v_1,v_4)\notin A(G)\cup A(G')$.
This implies $(v_1,v_4)\notin G\Delta G'.$
It exists the alternating cycle $C:=(P,(v_1,v_4))$ with $(v_1,v_4)\notin A(G).$ $G_1:=(G_0\setminus \{(v_1,v_2),(v_3,v_4)\})\cup \{(v_3,v_2),(v_1,v_4)\}$ is a realization of $S$ and it follows
$|G_0\Delta G_1|=4$ and $|G_1\Delta G'|=2\ell+2-2=2\ell.$ 
According to the induction hypothesis there exist realizations $G_1,G_2,\dots,G_k$ with $G_k:=G'$ which fulfill $1.)$ and $2.)$ where $k_1:=1$ and $k_2:=k-1\leq \frac{1}{2}|G_1\Delta G'|-1.$ Hence, we get the sequence $G_0,G_1,\dots,G_k$ with $k=k_1+k_2= 1+\frac{1}{2}|G_1\Delta G'|-1=\frac{1}{2}(|G\Delta G'|-3+1)= \frac{1}{2}|G\Delta G'|-1$ which fulfills $1.$ and $2.$
\end{enumerate}
\end{proof}

\begin{Corollary}
 State graph $\Phi$ is a strongly connected directed graph.
\end{Corollary}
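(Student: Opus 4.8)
The plan is to deduce strong connectivity of $\Phi$ directly from Lemma~\ref{TH1}, which is the substantive content already established. Recall that $\Phi$ is strongly connected if and only if for every ordered pair of distinct vertices $V_G, V_{G'} \in V_{\phi}$ there is a directed walk from $V_G$ to $V_{G'}$. So first I would fix two arbitrary distinct realizations $G$ and $G'$ of the graphical sequence $S$.

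\begin{proof}
Let $V_G$ and $V_{G'}$ be two arbitrary distinct vertices of $\Phi$, corresponding to realizations $G$ and $G'$ of $S$. By Lemma~\ref{TH1}, there exists a sequence of realizations $G_0, G_1, \dots, G_k$ with $G_0 = G$ and $G_k = G'$ such that for each $i \in \{0, \dots, k-1\}$ either $|G_i \Delta G_{i+1}| = 4$, or $|G_i \Delta G_{i+1}| = 6$ with $G_i \Delta G_{i+1}$ consisting of a directed $3$-cycle together with its opposite orientation (and hence exactly three distinct vertices). By the definition of the arc set $A_{\phi}$, condition a)1.\ guarantees an arc $(V_{G_i}, V_{G_{i+1}})$ whenever $|G_i \Delta G_{i+1}| = 4$, and condition a)2.\ guarantees such an arc whenever $|G_i \Delta G_{i+1}| = 6$ with exactly three distinct vertices. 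In either case, $(V_{G_i}, V_{G_{i+1}}) \in A_{\phi}$, so the sequence $V_{G_0}, V_{G_1}, \dots, V_{G_k}$ is a directed walk from $V_G$ to $V_{G'}$ in $\Phi$. Since $G$ and $G'$ were arbitrary, $\Phi$ is strongly connected.
\end{proof}

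The verification is essentially a bookkeeping step: one must confirm that the two swap types produced by Lemma~\ref{TH1} are \emph{exactly} the two cases of the arc-definition a) of $\Phi$, so that every consecutive pair in the sequence is genuinely adjacent in the state graph. This matching is clean here because Lemma~\ref{TH1} was stated with precisely the symmetric-difference cardinalities (4 and 6-with-three-vertices) that define $A_{\phi}$; I would make sure to invoke the explicit claim in Lemma~\ref{TH1} that the length-6 steps arise only from a $3$-cycle and its reorientation, since condition a)2.\ requires the six-arc symmetric difference to span exactly three vertices. The only mild subtlety — and the closest thing to an obstacle — is symmetry of the walk: strong connectivity requires a directed path in \emph{both} directions, but this is immediate because $\Phi$ was already shown to be symmetric in the preceding lemma, so the reverse walk $V_{G_k}, \dots, V_{G_0}$ exists as well; equivalently, applying the argument to the ordered pair $(G', G)$ yields the return path.
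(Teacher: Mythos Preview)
Your proof is correct and matches the paper's approach: the paper states this corollary immediately after Lemma~\ref{TH1} without proof, treating it as an immediate consequence, and your argument spells out precisely that deduction by matching the two cases of Lemma~\ref{TH1} to conditions a)1.\ and a)2.\ of the arc set~$A_{\phi}$.
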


\subsection{Random Walks}

A random walk on $\Phi = (V_{\phi}, A_{\phi})$ can be described by
Algorithm~\ref{alg:directed}. 
We now require a data structure $DS$ containing all pairs of
non-adjacent arcs and all directed $2$-paths in the current
realization.

\begin{algorithm}[h]
    \caption{\label{alg:directed}Sampling realization digraphs} 
    \begin{algorithmic}[1]
      \REQUIRE sequence $S$, a directed graph $G=(V,A)$ with ${d^{+}_{G}(v_i) \choose d^{-}_{G}(v_i)}={a_{i}\choose b_{i}}~ \forall i\in \{1,\dots,n\}$ and $v_i\in V$, a mixing time $\tau.$
      \ENSURE A sampled directed graph $G'=(V,A')$ with ${d^+_{G'}(v_i) \choose d^-_{G'}(v_i)}={a_{v_{i}} \choose b_{v_{i}}} ~\forall i\in \{1,\dots,n\}$ and $v_i\in V.$
      \STATE $t:=0,~G':=G$ \COMMENT{initialization}
      \WHILE{$t<\tau$}
	\STATE Choose an element $p$ from $DS$ uniformly at random.\COMMENT{$p$ is a pair of non-adjacent arcs or a directed $2$-path.}
        \IF{$p$ is a pair of non-adjacent arcs $(v_{i_{1}},v_{i_{2}}),(v_{i_{3}},v_{i_{4}})$}
          \IF{$(v_{i_{1}},v_{i_{4}}),(v_{i_{3}},v_{i_{2}})\notin A(G')$}
		\STATE\COMMENT{Either walk on $\Phi$ to an adjacent realization $G'$}
		\STATE Delete $(v_{i_{1}},v_{i_{2}}),(v_{i_{3}},v_{i_{4}})$ in $A(G').$
		\STATE Add $(v_{i_{1}},v_{i_{4}}),(v_{i_{3}},v_{i_{2}})$ to $A(G').$
	   \ELSE \STATE \COMMENT{or walk a loop: `Do nothing'}
		
	   \ENDIF
	       \ELSE \STATE \COMMENT{$p$ is a directed $2$-path $P=(v_{i_{1}},v_{i_{2}},v_{i_{3}})$}
          \IF{$((v_{i_{3}},v_{i_{1}})\in A(G'))\land
            ((v_{i_{2}},v_{i_{1}}),(v_{i_{3}},v_{i_{2}}),(v_{i_{1}},v_{i_{3}})\notin A(G')) \land (i_3 > i_1) \land (i_3 > i_2) $}
		\STATE \COMMENT{Walk on $\Phi$ to an adjacent
                  realization $G'$ with a reoriented directed $3$-cycle}
          	\STATE Delete $(v_{i_{1}},v_{i_{2}}),(v_{i_{2}},v_{i_{3}}),(v_{i_{3}},v_{i_{1}})$ in $A(G').$
	  	\STATE Add $(v_{i_{2}},v_{i_{1}}),(v_{i_{3}},v_{i_{2}}),(v_{i_{1}},v_{i_{3}})$ to $A(G').$
	   \ELSE \STATE \COMMENT{Walk a loop: `Do nothing'}
	
         \ENDIF
	\ENDIF
	\STATE update data structure $DS$
	\STATE $t\leftarrow t+1$
      \ENDWHILE
       \end{algorithmic}
  \end{algorithm}  

\begin{Theorem}
 Algorithm \ref{alg:directed} is a random walk on state graph $\Phi$
 which samples uniformly at random a directed graph $G'=(V,A)$ as a realization of sequence $S$ for $\tau\rightarrow \infty.$  
\end{Theorem}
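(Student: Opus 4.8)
The plan is to show that Algorithm~\ref{alg:directed} realizes \emph{exactly} a uniform random walk on $\Phi$, and then to invoke the four structural properties already established together with the classical convergence result cited for such walks. First I would argue that the data structure $DS$ has cardinality equal to the common out-degree $d_\Phi$ of every vertex of $\Phi$: by the regularity lemma the number of pairs of non-adjacent arcs plus the number of directed $2$-paths, namely $\binom{|A(G)|}{2} - \sum_i \binom{a_i}{2} - \sum_i \binom{b_i}{2}$, is a constant independent of the current realization $G$. Hence selecting $p \in DS$ uniformly at random amounts to selecting one outgoing object with probability $1/d_\Phi$, and it remains to match each such object with a distinct outgoing arc of $\Phi$, be it a proper arc or a directed loop.

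The core of the argument is this matching, which I would carry out by a case distinction mirroring the construction of $A_\phi$. For a pair of non-adjacent arcs $p = ((v_{i_1},v_{i_2}),(v_{i_3},v_{i_4}))$, the IF-branch fires --- producing the $2$-swap to a neighbor $G'$ with $|G \Delta G'| = 4$ --- exactly when neither $(v_{i_1},v_{i_4})$ nor $(v_{i_3},v_{i_2})$ is already present, that is, exactly the complement of the loop condition b)1); otherwise the \emph{Do nothing} branch is taken and corresponds to a loop. For a directed $2$-path $p = (v_{i_1},v_{i_2},v_{i_3})$ the reorientation fires precisely under the conjunction $(v_{i_3},v_{i_1}) \in A(G') \wedge (v_{i_2},v_{i_1}),(v_{i_3},v_{i_2}),(v_{i_1},v_{i_3}) \notin A(G') \wedge i_3 > i_1 \wedge i_3 > i_2$, which is the exact negation of the loop conditions b)2)i)--iii), so the remaining cases are again absorbed as loops, with b)3) covering realizations containing no $2$-path at all. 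Here I would also note that the membership condition $(v_{i_3},v_{i_1}) \in A(G')$ certifies that an induced directed $3$-cycle is present, while the three non-membership conditions guarantee that reorienting it introduces no parallel arc, so the resulting $G'$ is a genuine realization with $|G \Delta G'| = 6$ on three vertices.

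The step I expect to be the main obstacle is verifying that this correspondence is a genuine \emph{bijection}, so that every type-a) arc of $\Phi$ carries transition probability exactly $1/d_\Phi$ and is neither missed nor multiply counted. The delicate point is the $3$-cycle-reorientation arcs: an induced directed $3$-cycle contains three directed $2$-paths, any one of which would in principle generate the same reorientation of $G$ into $G'$, so I must confirm that the index constraints $i_3 > i_1$ and $i_3 > i_2$ single out exactly one of these three $2$-paths --- the one whose endpoint $v_{i_3}$ has the largest index --- matching clause b)2)iii) and thereby preventing triple-counting. Collisions across the two types cannot occur, since $2$-swap neighbors and reorientation neighbors are distinguished by $|G \Delta G'| \in \{4,6\}$, and within the $2$-swap type the removed arc pair is just $p$ itself, giving injectivity immediately. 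Once this bijection is confirmed, each arc of $\Phi$ is traversed with probability $1/d_\Phi$ and every loop absorbs the matching \emph{Do nothing} transitions, so the algorithm is precisely the uniform random walk on $\Phi$.

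Finally I would assemble the conclusion: by the preceding lemma $\Phi$ is symmetric, $d_\Phi$-regular, and non-bipartite, and by the preceding corollary it is strongly connected. These are exactly the four hypotheses (aperiodic, irreducible, symmetric, regular) under which the distribution $P_t$ of a random walk on a digraph converges to the uniform stationary distribution as $t \to \infty$~\cite{Lovasz96}. Since running Algorithm~\ref{alg:directed} for $\tau$ iterations produces exactly the distribution $P_\tau$, letting $\tau \to \infty$ yields a realization drawn uniformly at random from all realizations of $S$, which is the claim.
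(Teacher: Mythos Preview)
Your proposal is correct and follows essentially the same route as the paper's proof: identify each element of $DS$ with an outgoing arc of $\Phi$ (proper arc or loop), observe that uniform selection from $DS$ therefore realizes the transition matrix with entries $1/d_\Phi$, and then invoke regularity, symmetry, non-bipartiteness, and strong connectivity of $\Phi$ for convergence to the uniform distribution. You are in fact more careful than the paper on two points---you spell out why the index constraints $i_3>i_1,\,i_3>i_2$ pick exactly one of the three $2$-paths in an induced $3$-cycle (preventing triple-counting of reorientation arcs), and your expression $\binom{|A(G)|}{2}-\sum_i\binom{a_i}{2}-\sum_i\binom{b_i}{2}$ for $|DS|=d_\Phi$ is the correct one, whereas the paper's stated value $\binom{|A(G)|}{2}-2\sum_i\binom{a_i}{2}$ tacitly assumes $\sum_i\binom{a_i}{2}=\sum_i\binom{b_i}{2}$.
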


\begin{proof}
Algorithm \ref{alg:directed} chooses elements in $DS$ with the same
constant probability. For a vertex $V_G\in V_{\phi}$ there exist for
all these pairs of arcs in $A(G')$ either incoming and outgoing arcs
on $V_{G'}$ in $\Phi$ or a loop. Let $d_{\phi}:={|A(G)|\choose
  2}-2\sum_{i=1}^{n}{a_i \choose 2}.$ We get a transition matrix $M$
for $\Phi$ with $p_{ij}=\frac{1}{d_{\phi}}$ for $i,j\in A(\Phi),i\neq j$,
$p_{ij}=1-\sum_{\{i|(i,j)\in A(\Phi), ~i\neq j\}}\frac{1}{d_{\phi}}$ for
$i,j\in V_{\phi},i=j$, otherwise we set $p_{ij}=0.$ Since, $\Phi$ is a
regular, strongly connected, symmetrical and non-bipartite directed
graph, the distribution of all realizations in a $t$th step converges
asymptotically to the uniform distribution.  
\end{proof}

\section{Arc-Swap Sequences}\label{sec:arc-swap-sequences}

In this section, we study under which conditions the simple switching
algorithm works correctly for digraphs. The Markov chain used in the switching
algorithm works on the following simpler state 
graph $\overline{\Phi}=(V_{\overline{\phi}}, A_{\overline{\phi}})$. 
We define $A_{\overline{\phi}}$ as follows.
\begin{enumerate}
\item[a)] We connect two vertices $V_G,V_{G'}\in V_{\overline{\phi}}, G\neq G'$ with arcs $(V_G,V_{G'})$ and $(V_{G'},V_G)$ if and only if $|G\Delta G'|=4$ is fulfilled.
\item[b)] We set for each pair of non-adjacent arcs $(v_{i_{1}},v_{i_{2}}),(v_{i_{3}},v_{i_{4}})\in A(G), i_{j}\in \{1,\dots,n\}$ a directed loop $(V_G,V_G)$ if and only if 
 $(v_{i_{1}},v_{i_{4}})\in A(G) \lor (v_{i_{3}},v_{i_{2}})\in A(G)$.
\item[c)] We set one directed loop $(V_G,V_G)$ for all $V_G\in V_{\overline{\phi}}.$ 
\end{enumerate}

\begin{Lemma}
 The state digraph $\overline{\Phi}=(V_{\overline{\phi}}, A_{\overline{\phi}})$ is non-bipartite, symmetric, and regular. 
\end{Lemma}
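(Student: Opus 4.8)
The plan is to mirror the structure of the proof given earlier for the undirected state graph $\Psi$ (Lemma~\ref{lemma:state-graph-undirected-basic-properties}) and for the full directed state graph $\Phi$, since $\overline{\Phi}$ is a simplified version of $\Phi$ in which the $3$-cycle reorientation arcs of item a)2. and the loop-rules b)2. and b)3. are dropped. The three properties to establish are non-bipartiteness, symmetry, and regularity, and I would treat them in that order because the first two are immediate and the real content is regularity.

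First I would dispatch \emph{symmetry}: by construction rule a), whenever we connect $V_G$ and $V_{G'}$ we insert both arcs $(V_G,V_{G'})$ and $(V_{G'},V_G)$ simultaneously, so the adjacency relation is manifestly symmetric. Next, \emph{non-bipartiteness} follows at once from rule c), which attaches a directed loop to every vertex $V_G \in V_{\overline{\phi}}$; any graph with a loop contains an odd closed walk and is therefore aperiodic, i.e.\ non-bipartite. Both of these are one-line observations and require no case analysis, unlike the non-bipartiteness argument for $\Phi$, which had to work harder because there the universal loop was not present.

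The substantive step is \emph{regularity}, and here I would adapt the counting argument already used for $\Phi$, but with the simplification that in $\overline{\Phi}$ only pairs of non-adjacent arcs (not directed $2$-paths) generate transitions. For a fixed realization $G$, the key is that for \emph{each} unordered pair of non-adjacent arcs $(v_{i_1},v_{i_2}),(v_{i_3},v_{i_4})$ we contribute exactly one of the following to the out-degree of $V_G$: either an arc to an adjacent realization (when the $2$-swap is feasible, rule a)) or a directed loop (when it is blocked by an already-present arc, rule b)). Thus each such pair contributes exactly one unit to $d^-_{\overline{\phi}}(V_G)$, plus the single universal loop of rule c). The crucial point is that the number of non-adjacent arc pairs is the same constant for \emph{every} realization of $S$, namely
\[
\binom{|A(G)|}{2} - \left(\sum_{i=1}^{n}\binom{a_i}{2} + \sum_{i=1}^{n}\binom{b_i}{2} + \sum_{i=1}^{n} a_i b_i\right),
\]
where the three subtracted sums count the arc pairs sharing a common head, a common tail, and forming a directed $2$-path, respectively. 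Since $|A(G)| = \sum_i a_i = \sum_i b_i$ and the $a_i,b_i$ are fixed by $S$, this value depends only on $S$ and not on the particular realization $G$. Finally, by the symmetry established above, $d^+_{\overline{\phi}}(V_G) = d^-_{\overline{\phi}}(V_G)$, so every vertex has the same in- and out-degree and $\overline{\Phi}$ is regular.

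The main obstacle, such as it is, is bookkeeping rather than conceptual: I must argue carefully that rules a) and b) are \emph{mutually exclusive and exhaustive} over the non-adjacent arc pairs, so that each pair contributes exactly once and there is no double-counting. Concretely, for a pair of non-adjacent arcs the candidate swap produces arcs $(v_{i_1},v_{i_4}),(v_{i_3},v_{i_2})$; either both are absent from $A(G)$, in which case rule a) fires and we get a genuine neighbouring state, or at least one is already present, in which case rule b) fires and we get a loop. These cases partition all pairs, which is exactly what the $\Phi$-proof exploited, so the degree count is clean once this dichotomy is stated. (Note that unlike $\Phi$, there are no $2$-path contributions to track here, which makes the present argument strictly simpler.)
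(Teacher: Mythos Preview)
Your proposal is correct and follows essentially the same approach as the paper: symmetry from the two-sided arc insertion in rule a), non-bipartiteness from the universal loop in rule c), and regularity from the fact that each non-adjacent arc pair contributes exactly one outgoing arc (either a genuine transition or a loop) and that the number of such pairs equals the constant $\binom{|A(G)|}{2}-\bigl(\sum_i\binom{a_i}{2}+\sum_i\binom{b_i}{2}+\sum_i a_ib_i\bigr)$ depending only on $S$. If anything, your write-up is slightly more careful than the paper's, since you explicitly note the extra $+1$ from rule c) and spell out the exclusive-and-exhaustive dichotomy between rules a) and b).
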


\begin{proof} Since each vertex $V_G\in V_{\overline{\phi}}$ contains
  a loop, 
$\overline{\Phi}$ is not bipartite. At each time we set an arc we also
do this for its opposite direction. Hence, $\overline{\Phi}$ is
symmetric. The number of incoming and outgoing arcs at each $V_G$
equals the number of non-adjacent arcs in $G$, which is the constant
value ${|A(G)|\choose 2}-\left(\sum_{i=1}^{n}{a_i \choose
    2}+\sum_{i=1}^{n}{b_i \choose 2}+\sum_{i=1}^{n}a_ib_i \right).$ 
Thus, we get the regularity of $\overline{\Phi}.$
\end{proof}

\subsection{Characterization of Arc-Swap Sequences}

As shown in Example~\ref{example:disconnectedness}
in the Introduction, 
$\overline{\Phi}$ decomposes into several components, but we are able to
characterize sequences $S$ for which strong connectivity is fulfilled
in $\overline{\Phi}$. 
In fact, we will show that there are numerous sequences which only
require switching by 2-swaps. In the following we give necessary and sufficient conditions allowing to identify such sequences in polynomial running time.

\begin{Definition}\label{Def:1}
Let $S$ be a graphical sequence and let $G=(V,A)$ be an arbitrary realization. We denote a vertex subset $V'\subseteq V$ with $|V'|=3$ as an \emph{induced cycle set} $V'$ if and only if for each realization $G^*=(V,A^*)$ the induced subdigraph $G^*\left\langle V'\right \rangle$ is a directed $3$-cycle.
\end{Definition}

\begin{Definition}\label{Def:2}
Let $S$ be a graphical sequence and $G=(V,A)$ an arbitrary realization. We call $S$ an \emph{arc-swap-sequence} if and only if each subset $V'\subseteq V$ of vertices with $|V'|=3$ is not an induced cycle set.
\end{Definition}

This definition enables us  to use a simpler state graph for sampling
a realization $G$ for arc-swap-sequences. In
Theorem~\ref{th:arc-swap-main}, we will show show that in these cases 
we have only to
switch the ends of two non-adjacent arcs.

Before, we study how to recognize arc-swap sequences efficiently.
 Clearly, we may not determine all realizations to identify a sequence
 as an arc-swap-sequence. Fortunately, we are able to give a
 characterization of sequences allowing us to identify an
 arc-swap-sequence in only considering one realized digraph. We need a
 further definition for a special case of symmetric differences.

\begin{Definition}\label{def:simple-symmetric-difference}
Let $S$ be a graphical sequence and $G=(V,A)$ and $G^*=(V,A^*)$
arbitrary realizations. We call $G\Delta G^*$ \emph{simple symmetric
  cycle} if and only if each vertex $v \in V(G\Delta G^*)$
possesses vertex in-degree $d^{-}_{G\Delta G^*}(v)\leq 2$ and vertex
out-degree $d^{+}_{G\Delta(v) G^*}\leq 2$, and if $G\Delta G^*$ is an
alternating directed cycle. 
\end{Definition}

Note that the alternating directed cycle $C_1$ 
in Fig.~\ref{fig:decomposition-alternating}
is not a simple symmetric cycle, because $d^+_{C_1}(4) = 4$. 
Cycle $C_1$ decomposes into two simple symmetric cycles $C'_1 = \{
v_1,v_2,v_3,v_4,v_1 \}$ and $C''_1 = \{ v_2, v_3, v_5, v_4, v_2 \}$.

\begin{Theorem}\label{th:characterizing-arc-swap-sequences}
A graphical sequence $S$ is an arc-swap-sequence if and only if for any realization $G=(V,A)$ the following property is true: \\
For each induced, directed $3$-cycle $G\left \langle V'\right \rangle$
of $G$ there exists a realization $G^*=(V,A^*)$ 
so that $G \Delta G^*$ is
a simple symmetric cycle and that the induced subdigraph $G^*\left \langle V'\right \rangle$ is not a directed $3$-cycle.
\end{Theorem}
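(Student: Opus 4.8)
The plan is to prove both directions of the biconditional by relating the existence of a ``local'' reorientation (via a simple symmetric cycle) to the notion of an induced cycle set from Definition~\ref{Def:1}. Recall that $S$ is an arc-swap-sequence precisely when no $3$-subset $V'$ is an induced cycle set, i.e.\ for every induced directed $3$-cycle $G\langle V'\rangle$ there is \emph{some} realization $G^*$ in which $G^*\langle V'\rangle$ is not a directed $3$-cycle. The content of the theorem is that such a $G^*$ can always be chosen so that the symmetric difference $G\Delta G^*$ is not merely any collection of alternating cycles but a single \emph{simple} symmetric cycle (one with all in- and out-degrees at most $2$). So the whole difficulty is upgrading ``there is a realization witnessing non-cyclicity'' to ``there is a witness whose difference from $G$ is a simple symmetric cycle.''

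\textbf{The easy direction.}
First I would dispose of the implication that the stated property implies arc-swap. This is essentially immediate from the definitions: if for every induced directed $3$-cycle $G\langle V'\rangle$ there exists a realization $G^*$ with $G^*\langle V'\rangle$ not a directed $3$-cycle, then $V'$ fails to satisfy Definition~\ref{Def:1}, so no $3$-subset is an induced cycle set, and hence by Definition~\ref{Def:2} the sequence $S$ is an arc-swap-sequence. Here I only use the last clause of the property (``$G^*\langle V'\rangle$ is not a directed $3$-cycle'') and discard the simple-symmetric-cycle structure, which is why this direction is trivial.

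\textbf{The hard direction and the main obstacle.}
The reverse implication is where the work lies. Assume $S$ is an arc-swap-sequence and fix an induced directed $3$-cycle $G\langle V'\rangle$ with $V'=\{v_1,v_2,v_3\}$. Since $V'$ is not an induced cycle set, some realization $G'$ has $G'\langle V'\rangle$ not a $3$-cycle, so $G\Delta G'\neq\emptyset$. The difference $G\Delta G'$ decomposes into alternating directed cycles; my plan is to take a \emph{minimum} such decomposition and then extract from it a single alternating cycle $C$ that (i) still alters the orientation on $V'$ — i.e.\ touches at least one of the three arcs of the $3$-cycle on $V'$ — and (ii) can be refined into a simple symmetric cycle (every vertex of in/out-degree exceeding $2$ lets us split off a shorter alternating subcycle, exactly as in the remark that $C_1$ in Figure~\ref{fig:decomposition-alternating} splits into $C_1'$ and $C_1''$). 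The realization $G^*$ obtained by swapping the arcs along this single simple symmetric cycle differs from $G$ exactly in that cycle, giving $G\Delta G^*$ simple symmetric, and I must ensure the swap genuinely destroys the $3$-cycle structure on $V'$. The main obstacle is combinatorial bookkeeping: guaranteeing simultaneously that the extracted/refined cycle remains a legal alternating cycle between two realizations (so that swapping yields a valid realization of $S$ with no parallel arcs), that refinement to degree $\leq 2$ does not inadvertently discard \emph{all} arcs incident to $V'$, and that at least one arc of $G\langle V'\rangle$ flips so that $G^*\langle V'\rangle$ is no longer a directed $3$-cycle. I expect to argue that if every simple symmetric subcycle in the decomposition of $G\Delta G'$ avoided $V'$, one could reassemble a realization agreeing with $G$ on $V'$ yet differing from $G$ elsewhere, reducing to the case of a witness that already localizes the change at $V'$; handling this reduction, together with the degree-splitting argument, is the crux of the proof.
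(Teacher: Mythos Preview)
Your plan is essentially the paper's proof: start from any witness $G'$ with $G'\langle V'\rangle$ not a $3$-cycle, then peel off alternating subcycles from $G\Delta G'$ until a single simple symmetric cycle $C^*$ remains that still contains at least one arc with both endpoints in $V'$, and swap along $C^*$ to obtain $G^*$. The observation that dissolves your bookkeeping worry---and that the paper states explicitly---is that $G\Delta G'$ contains at least one but never all six arcs of $V'\times V'$ (all six would make $G'\langle V'\rangle$ the reversed $3$-cycle, contradicting the choice of $G'$), so any simple symmetric subcycle retaining one such arc automatically yields $G^*\langle V'\rangle$ equal to neither the original nor the reversed $3$-cycle.
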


\begin{proof}
$\Rightarrow:$ Let $S$ be a graphical arc-swap sequence and $G=(V,A)$
be an arbitrary realization. With Definition \ref{Def:2} it follows that
each subset $V'\subset V$ with $|V'|=3$ is not an induced cycle
set. Hence, there exists for each induced, directed $3$-cycle $G\left
  \langle V'\right \rangle$ of $G$ a realization $G'=(V,A')$ with
symmetric difference $G\Delta G'$ where the induced subdigraph
$G'\left \langle V'\right \rangle$ is not a directed cycle. If the
symmetric difference $G\Delta G'$ is not a simple symmetric cycle
we delete as long alternating cycles in $G\Delta G'$ as we get an
directed alternating cycle $C^*$ where each vertex in $C^*$ has at most
vertex in-degree two and at most vertex out-degree two. 
Furthermore, $C^*$ shall contain at least one arc $(v,v')\in V'\times
V'.$ This is possible, because $G\Delta G'$ contains at least one such 
arc. On the other hand the alternating cycle $C^*$ does not contain
all possible six of such arcs. Otherwise, the induced subdigraph
$G'\left \langle V'\right \rangle$ is a directed cycle. Now, we
construct the realization $G^*=(V,A^*)$ with $A^*:=(A(G)\setminus
(A(C^*)\cap A(G)))\cup (A(C^*)\cap A(G')).$ It follows $G\Delta
G^*=C^*$ is a simple symmetric 
difference.\\[1ex] 
$\Leftarrow:$ Let $G$ be any realization of sequence $S.$ We only have
to consider $3$-tuples of vertices $V'$ inducing directed $3$-cycles
in $G.$ With our assumption there exists for each $V'$ a realization
$G^*$ so that $G^*\left \langle V'\right \rangle$ is not a directed
$3$-cycle. Hence, we find for each subset $V'\subset V$ of vertices
with $|V'|=3$ a realization $G^*=(V,A)$, so that the induced
subdigraph $G^*\left \langle V'\right \rangle$ is not a directed
$3$-cycle. We conclude that $S$ is an 
arc-swap sequence.
\end{proof}

This characterization allows us to give a simple polynomial-time
algorithm to recognize arc-swap-sequences. All we have to do is to
check for each induced 3-cycle of the given realization, if it forms
an induced cycle set.
Therefore, we 
check for each arc $(v,w)$ in an induced 3-cycle whether  
there is an
alternating walk from $v$ to $w$ (not using arc $(v,w)$) which 
does not include all five remaining arcs of the 3-cycle and its reorientation. 
Moreover, each node on this
walk has at most in-degree 2 and at most out-degree 2.
Such an alternating walk can be found in linear time by 
using a reduction to an $f$-factor problem in a bipartite graph.
In this graph we search for an undirected alternating path by
growing
alternating trees (similar to matching algorithms in bipartite graphs,
no complications with blossoms will occur), see for example 
\cite{Schrijver03}. 
The trick to ensure that not all five arcs will
appear in the alternating cycle is to iterate over these five arcs and
exclude exactly one of them from the alternating
path search between $v$ and $w$. Of course, this loop stops
as soon as one alternating path is found. Otherwise, no such
alternating path exists.
As mentioned in the Introduction, a linear-time recognition 
is possible with a parallel Havel-Hakimi algorithm of LaMar~\cite{LaMar09}.

Next, we are going to prove that $\overline{\Phi}$ is strongly connected for 
arc-swap-sequences. The structure of the proof is similar to the case
of $\Phi$, but technically slightly more involved.

\begin{Lemma}\label{lemma:arc-swap-simple-symmetric}
Let $S$ be a graphical
   arc-swap-sequence and $G$ and $G^*$ be two different realizations. 
Assume that $V' := \{ v_1,v_2,v_3\} \subseteq V$  such that
$G\langle V'\rangle$ is an induced directed 3-cycle but
$G^*\langle V'\rangle$ is not an induced directed 3-cycle.
Moreover, assume that $G \Delta G^*$ is a simple symmetric cycle.
Then there are realizations $G_0,G_1,\dots,G_k$ with
 $G_0:=G,G_k:=G^*$, $|G_i\Delta G_{i+1}|=4$ and 
$k \leq \frac{1}{2} |G \Delta G^*|$.
\end{Lemma}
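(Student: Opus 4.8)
The goal is to prove Lemma~\ref{lemma:arc-swap-simple-symmetric}: given an arc-swap sequence and two realizations $G,G^*$ whose symmetric difference is a simple symmetric cycle, with $G\langle V'\rangle$ an induced directed 3-cycle on $V'=\{v_1,v_2,v_3\}$ that is destroyed in $G^*$, we can connect $G$ to $G^*$ by $2$-swaps alone, using at most $\frac{1}{2}|G\Delta G^*|$ steps.

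The plan is to argue by induction on $|G\Delta G^*|=2\kappa$, mirroring the structure of the proof of Lemma~\ref{TH1} but restricting attention to $2$-swaps (case~1, case~4, and the splitting case). Since $G\Delta G^*$ is a single alternating directed cycle by the hypothesis that it is a simple symmetric cycle, I would first invoke Proposition~\ref{PR2} to locate a vertex-disjoint alternating $3$-walk $P=(v_1,v_2,v_3,v_4)$ (or its dual $Q$, in which case I exchange the roles of $G$ and $G^*$ and reverse the resulting sequence, as in the proof of Lemma~\ref{TH1}). When $|G\Delta G^*|=6$ the cycle contains exactly three distinct vertices precisely in the forbidden $3$-cycle-reorientation case, which is excluded here because $G^*\langle V'\rangle$ is not a directed $3$-cycle; hence Proposition~\ref{PR1}~a) applies and gives two $2$-swaps, establishing the base case with $k\le 2=\frac{1}{2}\cdot 6 - 1 < \frac{1}{2}|G\Delta G^*|$.

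For the inductive step I would examine the arc $(v_1,v_4)$ exactly as in Lemma~\ref{TH1}. The key observation is that because $G\Delta G^*$ is a \emph{simple} symmetric cycle (every in- and out-degree bounded by $2$), the problematic case~3 of Lemma~\ref{TH1} — where $(v_1,v_4)\in A(G)\setminus A(G^*)$ forces extension to a long cycle and an eventual $3$-cycle reorientation — cannot arise in its full severity: the degree bounds prevent the nested odd sub-walks $P_1,P_2$ that made a $3$-cycle reorientation unavoidable, so one always finds a genuine alternating $4$-cycle to swap on. In case~1 ($(v_1,v_4)\in A(G^*)\setminus A(G)$) and case~4 ($(v_1,v_4)\notin A(G)\cup A(G^*)$) a single $2$-swap strictly reduces the symmetric difference; after the swap the residual difference need no longer be a simple symmetric cycle, so I would phrase the induction hypothesis more robustly, or re-split the residual cycle into simple symmetric pieces, tracking that each piece either is handled by ordinary $2$-swaps or, if it reduces to an isolated $3$-cycle, can itself be broken by the arc-swap property applied to that triple.

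The main obstacle I anticipate is precisely guaranteeing that no step ever requires a $3$-cycle reorientation, i.e.\ that the arc-swap hypothesis together with simplicity of the symmetric difference rules out the case~3/case-$2$ scenarios that produced $6$-cycles with three vertices in Lemma~\ref{TH1}. The crucial point is that the \emph{only} alternating cycle on three distinct vertices is the full induced $3$-cycle and its reorientation, and by hypothesis $G^*$ does not reorient the designated $3$-cycle on $V'$; combined with the $\le 2$ degree bounds, any residual three-vertex obstruction would have to be a different induced cycle set, contradicting the arc-swap property (Definition~\ref{Def:2}). Making this exclusion airtight — and carefully bounding the step count so the two concatenated sub-sequences sum to at most $\frac{1}{2}|G\Delta G^*|$ (note the bound here is $\le \frac{1}{2}|G\Delta G^*|$ rather than $-1$, reflecting the possible extra swap needed to break the last $3$-cycle) — is where the technical work lies.
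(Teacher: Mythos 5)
Your overall strategy (induction on $|G\Delta G^*|$, mirroring Lemma~\ref{TH1} via Propositions~\ref{PR2} and~\ref{PR1}~a)) matches the paper's, but there is a genuine gap at the one place where this lemma differs essentially from Lemma~\ref{TH1}: the three-vertex alternating $6$-cycle. You claim that when $|G\Delta G^*|=6$ the three-distinct-vertices case ``is excluded here because $G^*\langle V'\rangle$ is not a directed $3$-cycle.'' That hypothesis only excludes the case where the three vertices of the symmetric difference are exactly $V'$. Since $G^*\langle V'\rangle$ fails to be a $3$-cycle, $G\Delta G^*$ must contain an arc with both ends in $V'$, so if it is a three-vertex $6$-cycle its vertex set has the form $\{v_2,v_3,v_4\}$ with $v_4\notin V'$ --- a directed $3$-cycle of $G$ and its reorientation sharing exactly the arc $(v_2,v_3)$ with $G\langle V'\rangle$. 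This configuration is perfectly consistent with all hypotheses, and it is the crux of the lemma: the paper resolves it by exploiting the third vertex $v_1\in V'$, which lies \emph{outside} the symmetric difference, together with the arcs $(v_1,v_2),(v_3,v_1)\in A(G)\cap A(G^*)$ and a case distinction on $(v_1,v_4)$, to produce three genuine $2$-swaps (whence the bound $\frac{1}{2}|G\Delta G^*|$ rather than $\frac{1}{2}|G\Delta G^*|-1$). Your proposal never constructs these swaps.

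The same gap propagates into your induction step. After a $2$-swap the residual difference can again collapse to an isolated three-vertex $6$-cycle, and neither of your fallbacks works: (i) the claim that the degree bounds of a simple symmetric cycle ``prevent the nested odd sub-walks $P_1,P_2$'' imports the case analysis of the \emph{undirected} Lemma~\ref{lemma:undirected-connectedness}, which plays no role in the directed argument; (ii) breaking an isolated $3$-cycle ``by the arc-swap property applied to that triple'' is circular --- that is exactly what Proposition~\ref{prop:arc-swap-6} accomplishes, and it is proved \emph{from} the present lemma; moreover it costs up to $2n+2$ swaps, destroying the bound $k\le\frac{1}{2}|G\Delta G^*|$. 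The paper avoids both problems by strengthening what is carried through the induction: it first shows, using Proposition~\ref{PR2}, that one may assume every vertex-disjoint alternating $3$-walk contains an arc of $G\langle V'\rangle$ or its reorientation, and then splits into cases according to how many such arcs the symmetric difference contains (one, two, or three to five), in each case exhibiting an alternating $4$-cycle whose removal leaves a piece still containing such an arc, so that the induction hypothesis (with its reference to $V'$) remains applicable. This bookkeeping is the substantive content of the proof and is missing from your proposal.
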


\begin{proof}
We prove this lemma by induction on the cardinality of $G \Delta G^*$.
The base case $ |G \Delta G^*|=4$ is trivial. 
Consider next the case $ |G \Delta G^*|=6$. We distinguish between two
subcases.
\begin{enumerate}
\item[case a)] $G \Delta G^*$ consists of at least four different
 vertices.\\
By Proposition~\ref{PR1}, case a), there are realizations $G=G_0,G_1,
G_2=G^*$ with $|G_i\Delta G_{i+1}|=4$.
\item[case b)] $G \Delta G^*$ consists of exactly three vertices 
$v_4,v_5,v_6$.\\
Observe that $G \Delta G^*$ contains at least one arc from $G\langle
V' \rangle$ or its reorientation but not all three vertices $V'$
as otherwise $G^*\langle V' \rangle$ would be an induced $3$-cycle.
In fact, it turns out that $G \Delta G^*$ contains exactly 
one arc, say $(v_2,v_3)$, from $G\langle V' \rangle$ and its opposite arc
$(v_3,v_2)$, because  $G \Delta G^*$ is the directed alternating 
cycle $C :=(v_2,v_3,v_4,v_2,v_3,v_4,v_2)$ with $v_4 \neq v_1$.
We have two subcases. Assume first that $(v_1,v_4) \not\in A(G) \cap
A(G^*)$. So we can swap the directed alternating cycle 
$(v_1,v_2,v_3,v_4,v_1)$ in a single step. We then obtain the directed
alternating $6$-cycle $(v_2,v_3,v_4,v_2,v_1,v_4,v_2)$ which consists
of four different vertices. By Proposition~\ref{PR1}, case a), we can
swap the arcs of this cycle in two steps, thus in total in three steps
as claimed.
Otherwise, $(v_1,v_4) \in A(G) \cap A(G^*)$. Then we obtain the
directed alternating cycle $(v_1,v_4,v_2,v_3,v_1)$ which can be
swapped  in a single step. By that, we obtain a 
new cycle $(v_1,v_3,v_4,v_2,v_3,v_4,v_1)$ 
which consists
of four different vertices. By Proposition~\ref{PR1}, case a), we can
swap the arcs of this cycle in two steps, thus in total in three steps
as claimed.
\end{enumerate}

For the induction step, let us consider $|G \Delta G^*| = 2 \ell + 2
\geq 8$.
Then $G \Delta G^*$ contains between one and five arcs 
from $G\langle V' \rangle$ and its reorientation.
By Proposition~\ref{PR2}, there is a vertex-disjoint alternating
directed walk $P = (w_1,w_2,w_3,w_4)$ in $G \Delta G^*$ 
with $(w_1,w_2) \in A(G)\setminus A(G^*)$
or $(w_1,w_2) \in A(G^*)\setminus A(G)$.

Suppose that $P$ contains no arc  from $G\langle V' \rangle$ 
and its reorientation. 
We consider the case $(w_1,w_2) \in A(G)\setminus A(G^*)$.
If $(w_1,w_4) \in A(G) \cap A(G^*)$, then we consider
$C = (G \Delta G^*) \cup \{ (w_1,w_4)\} \setminus P$. 
We swap the arcs of $C$ and obtain as realization $G^{**}$.
Clearly, $G \Delta G^{**}$ contains an arc from $G\langle V' \rangle$ 
or its reorientation, and is a simple symmetric cycle. 
As $|G \Delta G^{**}| =
2\ell$, we can apply the induction hypothesis. We obtain a sequence of
realizations $G=G_0, G_1, \dots, G_k = G^{**}$ with   
$|G_i\Delta G_{i+1}|=4$ and 
$k \leq \frac{1}{2} |G \Delta G^{**}| \leq \frac{1}{2} (|G \Delta
G^{*}|-2)$. Finally, we apply a last swap on the cycle
$(w_1,w_2,w_3,w_4,w_1)$ and thereby transform $G^{**}$ to $G^*$.
In total, the number of swap operations is $k \leq \frac{1}{2} |G \Delta
G^{*}|$.

The case $(w_1,w_4) \not\in A(G) \cap A(G^*)$ is similar. 
This time, we start with a single swap on the cycle $(w_1,w_2,w_3,w_4,w_1)$
and afterwards apply induction to the remaining cycle.
We can treat the case $(w_1,w_2) \in A(G^*)\setminus A(G)$ analogously.
Thus we can exclude the existence of any vertex-disjoint directed alternating
$3$-walk which does not contain at least one arc from
$G\langle V' \rangle$ and its reorientation.
 
It remains to consider the case that there 
is a vertex-disjoint directed alternating
$3$-walk $P = (w_1,w_2,w_3,w_4)$ 
in $G \Delta G^*$ with $(w_1,w_2) \in A(G) \setminus A(G^*)$
or $(w_1,w_2) \in A(G^*) \setminus A(G)$
 but at least one 
arc of $P$ is from $G\langle V' \rangle$ and its reorientation,
say $(v_1,v_2)$.

Recall that $G \Delta G^*$ contains between one and five arcs 
from $G\langle V' \rangle$ and its reorientation.
We distinguish between three cases:

\begin{enumerate}
\item[case I:] $G \Delta G^*$ contains exactly one of these arcs, 
say $(v_1,v_2) \in A(G) \setminus A(G^*)$. (The case that $(v_2,v_1) \in
 A(G^*) \setminus A(G)$ is this special arc can be treated analogously.)
\\
We claim that
the cycle $G \Delta G^*$ must have the form
$(v_1,v_2,v_4,v_5,v_6,v_4,v_5,v_6,v_1)$. Note that $v_4,v_5,v_6$ are
repeated every third step, as otherwise we would obtain a
vertex-disjoint alternating cycle as excluded above.
The cycle cannot be longer than eight, since then we would either obtain a 
vertex-disjoint 3-walk $(v_4,v_5,v_6,v_7)$, also excluded above, or
if $v_4=v_7$ we would violate simplicity of the symmetric difference.
It might be that 
$v_5=v_3$, but $v_4,v_6 \neq v_3$ as otherwise the symmetric
difference would contain more than one arc 
from $G\langle V' \rangle$ and its reorientation.
If $(v_1,v_4) \in A(G) \cap A(G^*)$ there is the alternating directed 
4-cycle $(v_1,v_4,v_5,v_6,v_1)$ which can be swapped. In the remaining
6-cycle the arc $(v_1,v_2)$ is contained, so the induction hypothesis
can be applied.
Otherwise, if $(v_1,v_4) \not\in A(G) \cap A(G^*)$, we first apply the
induction hypothesis to the 6-cycle $(v_1,v_2,v_4,v_5,v_6,v_4,v_1)$,
and afterwards we swap the remaining 4-cycle $(v_1,v_4,v_5,v_6,v_1)$.

\item[case II:] $G \Delta G^*$ contains exactly two of these arcs.\\
Suppose first that these two arcs are adjacent, say $(v_1,v_2),
(v_3,v_2)$. Consider the following arcs $(v_3,v_4), (v_5,v_4),
(v_5,v_6)$ along the symmetric difference. Now $v_5=v_2$ as otherwise
there is an alternating directed walk 
$(v_2,v_3,v_4,v_5)$. Depending whether $(v_5,v_2) \in A(G) \cap A(G^*)$ or not, we can
either swap the alternating 4-cycle $(v_3,v_4,v_5,v_2,v_3)$ or the
remaining part of the symmetric difference together with $(v_5,v_2)$ 
by the induction hypothesis.
Moreover, $v_6=v_3$, as otherwise there would be the vertex-disjoint
alternating directed 3-walk $(v_3,v_4,v_5,v_6)$ excluded above.
But then $(v_3,v_2)$ is also in the symmetric
difference, a contradiction.
Thus, the two arcs from $G\langle V' \rangle$ and its reorientation
are not adjacent. Then, there are at least two other arcs between them
(otherwise the one arc between them would also be from $G\langle V'
\rangle$ and its reorientation). 
By our assumption, there is a vertex-disjoint alternating directed
3-walk $P = (w_1,w_2,w_3,w_4)$ with at least one arc from $G\langle V' \rangle$ and 
its reorientation. In our scenario it must be exactly one such arc. 
Depending whether $(w_1,w_4) \in A(G) \cap A(G^*)$ or not, we can
either swap the alternating 4-cycle $(w_1,w_2,w_3,w_4,w_1)$ or the
remaining part of the symmetric difference together with $(w_1,w_4)$ 
by the induction hypothesis.

\item[case III:] $G \Delta G^*$ contains between three and five 
of these arcs.\\
Suppose first all of them follow consecutively on the alternating
directed cycle. Consider the last two of these arcs, and append the
next arc which must end in a vertex $v_4 \not\in V'$. Then we have
a vertex-disjoint alternating directed 3-walk which contains two arcs
from $G\langle V' \rangle$ and its reorientation, and the remaining
part of the symmetric difference has also such an arc.
Thus we can apply the induction hypothesis and are done.
Otherwise the three to five arcs from $G\langle V' \rangle$ 
and its reorientation are separated. So no alternating directed 
3-walk may contain all of them, in particular not $P$.
We can proceed as in case II).
\end{enumerate}
\vspace*{-2ex}
\end{proof}

\begin{Proposition}\label{prop:arc-swap-6} Let $S$ be a graphical
   arc-swap-sequence and $G$ and $G'$ be two different realizations. 
 If $|G\Delta G'|=6$ and $G\Delta G'$ consists of exactly three
 vertices $V' := \{ v_1,v_2,v_3\}$, 
 then there exist realizations $G_0,G_1,\dots,G_k$ with
 $G_0:=G,G_k:=G'$, $|G_i\Delta G_{i+1}|=4$ and $k \leq 2n+2.$
 \end{Proposition}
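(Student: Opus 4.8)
The plan is to reduce the single $3$-cycle reorientation $G\to G'$ to two applications of Lemma~\ref{lemma:arc-swap-simple-symmetric}. First observe that because $|G\Delta G'|=6$ and $G\Delta G'$ has exactly the three vertices $V'=\{v_1,v_2,v_3\}$, the induced subdigraphs $G\langle V'\rangle$ and $G'\langle V'\rangle$ are the two opposite orientations of a directed $3$-cycle, and $G$ and $G'$ coincide on every arc not lying inside $V'$. Since $S$ is an arc-swap sequence, $V'$ is not an induced cycle set (Definition~\ref{Def:2}), so Theorem~\ref{th:characterizing-arc-swap-sequences}, applied to the induced $3$-cycle $G\langle V'\rangle$, yields a realization $G^*$ for which $G\Delta G^*$ is a simple symmetric cycle and $G^*\langle V'\rangle$ is \emph{not} a directed $3$-cycle.

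For the first phase I would feed the triple $(G,G^*,V')$ into Lemma~\ref{lemma:arc-swap-simple-symmetric}. Its hypotheses are met verbatim ($G\langle V'\rangle$ an induced $3$-cycle, $G^*\langle V'\rangle$ not one, and $G\Delta G^*$ a simple symmetric cycle), so it produces a $2$-swap sequence $G=G_0,\dots,G_{k_1}=G^*$ with $k_1\le\tfrac12|G\Delta G^*|$. Since a simple symmetric cycle has in- and out-degree at most $2$ at every vertex, it visits each vertex at most twice and hence uses at most $2n$ arcs, whence $k_1\le n$.

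The second phase is where the real work lies. I would like to run Lemma~\ref{lemma:arc-swap-simple-symmetric} again, now on the pair $(G',G^*)$: here $G'\langle V'\rangle$ is an induced $3$-cycle and $G^*\langle V'\rangle$ is not, so the only missing hypothesis is that $G'\Delta G^*$ be a simple symmetric cycle; once it holds, reversing the resulting sequence gives $G^*\to G'$ in $k_2\le\tfrac12|G'\Delta G^*|\le n$ swaps, and concatenation finishes the argument. The obstacle is that $G'\Delta G^*=(G\Delta G^*)\,\Delta\,(G\Delta G')$ need not be simple: the reorientation $G\Delta G'$ contributes in- and out-degree $2$ at each vertex of $V'$, which, combined with the up-to-two degrees already present from $G\Delta G^*$, can raise some degree inside $V'$ above $2$ or split the structure into more than one alternating cycle. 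Overcoming this is the heart of the proof.

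To remove the obstacle I would refine the construction of $G^*$ (as in the $\Rightarrow$-direction of Theorem~\ref{th:characterizing-arc-swap-sequences}) so that $G\Delta G^*$ meets $V'\times V'$ in as few arcs as possible, and then distinguish cases according to how many of the (at most five) arcs of $G\langle V'\rangle$ and its reorientation occur on the simple symmetric cycle, exactly along the lines of the case analysis~I--III of Lemma~\ref{lemma:arc-swap-simple-symmetric}. In the generic case, where the cycle meets $V'$ in a single arc, one checks directly that recombining with $G\Delta G'$ again yields a single alternating cycle of in-/out-degree at most $2$, so the second invocation goes through unchanged. In the remaining cases I expect to need a constant number of preliminary $2$-swaps that first restore simplicity at the vertices of $V'$ before Lemma~\ref{lemma:arc-swap-simple-symmetric} can be applied; these at most two extra swaps are precisely what turns the bound $k_1+k_2\le 2n$ into the claimed $k\le 2n+2$.
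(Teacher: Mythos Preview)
Your two–phase plan via the auxiliary realization $G^*$ from Theorem~\ref{th:characterizing-arc-swap-sequences} is exactly the paper's strategy, and your first phase (apply Lemma~\ref{lemma:arc-swap-simple-symmetric} to $(G,G^*,V')$ for $k_1\le n$) matches the paper verbatim.

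The divergence is in the second phase. You correctly isolate the obstacle that $G'\Delta G^*$ need not be a simple symmetric cycle, but your proposed remedy---refine $G^*$ so that $G\Delta G^*$ meets $V'\times V'$ minimally, then argue case by case that $G'\Delta G^*$ stays simple---has a concrete gap. In your ``generic'' case (a single $V'$-arc, say $(v_1,v_2)$, on $G\Delta G^*$), the XOR with $G\Delta G'$ removes $(v_1,v_2)$ but adds the \emph{other five} arcs of $V'\times V'$. In particular $v_3$ picks up in- and out-degree $2$ from these five arcs alone. Nothing in your refinement prevents $G\Delta G^*$ from also visiting $v_3$ via arcs \emph{outside} $V'\times V'$ (this is perfectly compatible with the cycle being simple and touching $V'\times V'$ only in $(v_1,v_2)$); whenever that happens, $v_3$ ends up with degree exceeding $2$ in $G'\Delta G^*$, and simplicity fails. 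The same issue arises at $v_1$ or $v_2$ if the cycle revisits them. So the ``one checks directly'' step does not go through, and the subsequent case analysis is left without a base.

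The paper sidesteps this entirely: it never tries to make $G^*\Delta G'$ a single simple symmetric cycle. Instead it decomposes $G^*\Delta G'$ into \emph{several} simple symmetric cycles (any symmetric difference of realizations admits such a decomposition), observes that each piece must contain at least one arc of $V'\times V'$, and applies Lemma~\ref{lemma:arc-swap-simple-symmetric} to each piece in turn. Summing gives $k''\le\tfrac12|G^*\Delta G'|\le\tfrac12(|G\Delta G^*|+4)\le n+2$, so $k=k_1+k''\le 2n+2$ directly, with no refinement of $G^*$, no case split, and no extra swaps. That decomposition step is the missing idea in your proposal.
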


\begin{proof}
Since $S$ is an arc-swap-sequence,
Theorem~\ref{th:characterizing-arc-swap-sequences}
implies the existence of a realization $G^*$ such that $G \Delta G^*$
is a simple symmetric cycle and $G^*\langle V' \rangle$ is not a
directed $3$-cycle. By Lemma~\ref{lemma:arc-swap-simple-symmetric},
there are realizations $G_0, G_1, \dots, G_{k'} := G^*$ with 
$|G_i\Delta G_{i+1}|=4$ and $k' \leq \frac{1}{2} |G \Delta G^*| \leq
n$, since $G \Delta G^*$ is simple.
Moreover, we have $|G^* \Delta G'| \leq |G \Delta G^*| + 4 \leq 2n+4$ since
$G$ and $G'$ differ only in their orientation of the $3$-cycle induced
by $V'$. 
The symmetric difference $G^* \Delta G'$ is not necessarily a simple
symmetric cycle,
but can be decomposed into simple symmetric cycles, each containing
at least one arc from $G\langle V' \rangle$ 
and its reorientation.  
On each of these simple symmetric cycles we apply our 
auxiliary Lemma~\ref{lemma:arc-swap-simple-symmetric}. We obtain a
sequence $G^*:= G'_0, \dots, G'_{k''}:= G'$ with 
$|G_i\Delta G_{i+1}|=4$ and $k'' \leq n+2$.
Combining both sequences we obtain a sequence with $k = k' + k'' \leq 2n+2$.
\end{proof}

\begin{Lemma}\label{th:arc-swap-main}
 Let $S$ be a graphical arc-swap-sequence, and $G$ and $G'$ be two
 different realizations. Then there exist realizations $G_0,G_1,\dots,G_k$ with $G_0:=G$, $G_k:=G'$ and $|G_i\Delta G_{i+1}|=4,$ 
where $k\leq \left(\frac{1}{2} |G\Delta G'|-1\right) \cdot (n+1)$. 
\end{Lemma}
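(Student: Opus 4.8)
The plan is to reuse the recursion that underlies Lemma~\ref{TH1}, but to eliminate every $3$-cycle reorientation it would produce by the pure $2$-swap realization guaranteed for arc-swap sequences in Proposition~\ref{prop:arc-swap-6}. A black-box application of Lemma~\ref{TH1} followed by substituting each reorientation individually would lose a factor of two, so instead I would re-run the induction itself. Concretely, I argue by strong induction on $\kappa := \frac{1}{2}|G\Delta G'|$, following exactly the case distinction in the proof of Lemma~\ref{TH1} (the vertex-disjoint $3$-walk from Proposition~\ref{PR2}, the split of the symmetric difference into weakly connected components, and cases 1--4). The only places where Lemma~\ref{TH1} leaves the realm of $2$-swaps are the $3$-cycle reorientations: the base case $\kappa = 3$ where $G\Delta G'$ is a single reoriented $3$-cycle (Proposition~\ref{PR1}, case b), and the induction step in which every remaining component is such a $3$-cycle. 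At each of these spots I invoke Proposition~\ref{prop:arc-swap-6} to carry out that single reorientation with at most $2n+2 = 2(n+1)$ genuine $2$-swaps.

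The accounting is where the factor $(n+1)$, rather than $2(n+1)$, has to be earned, and this is the step I expect to need the most care. The trick is to charge each reorientation not against the worst-case number of Lemma~\ref{TH1} operations but against the amount by which it shrinks the symmetric difference: reorienting a $3$-cycle resolves a whole component, i.e.\ it drops $\kappa$ by $3$, at a cost of at most $2(n+1)$ swaps. Replacing a realization at distance $\kappa$ by one at distance $\kappa-3$ and then applying the induction hypothesis gives $2(n+1) + (\kappa-4)(n+1) = (\kappa-2)(n+1) \le (\kappa-1)(n+1)$, which stays inside the budget (and in the base case $\kappa=3$ the budget $(\kappa-1)(n+1) = 2(n+1) = 2n+2$ matches Proposition~\ref{prop:arc-swap-6} exactly). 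For the pure $2$-swap cases I reproduce the telescoping estimate of Lemma~\ref{TH1} with every term scaled by $(n+1)$: when a swap on a shortened cycle $C^*$ splits the task into $G\to G^*$ and $G^*\to G'$, the inductive bound $(\tfrac{1}{2}|C^*|-1)(n+1) + (\tfrac{1}{2}|G^*\Delta G'|-1)(n+1)$ collapses to $(\kappa-1)(n+1)$, because $|C^*| + |G^*\Delta G'| = |G\Delta G'| + 2$ (and case~3, where the two lengths sum to $|G\Delta G'|$, is even better). Both subproblems have strictly smaller $\kappa$ since $|C^*|\ge 4$, so the strong induction is well-founded.

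One subtlety I would make explicit is that resolving a single $3$-cycle component via Proposition~\ref{prop:arc-swap-6} does not disturb the other components of $G\Delta G'$: the proposition transforms the current realization into the one obtained by reorienting only the triple $V'$, so although its intermediate $2$-swaps may temporarily touch vertices outside $V'$, the net change is confined to $V'$ and the remaining symmetric difference is exactly $G\Delta G'$ with that component removed. This keeps the induction on $\kappa$ clean and lets the components be handled one at a time. The remaining ingredients are all inherited: the existence of a realization $G^*$ that reorients a prescribed induced $3$-cycle while keeping $G\Delta G^*$ a simple symmetric cycle is the arc-swap characterization (Theorem~\ref{th:characterizing-arc-swap-sequences}), and the $2$-swap realization of a single reorientation is Proposition~\ref{prop:arc-swap-6}, whose own proof rests on Lemma~\ref{lemma:arc-swap-simple-symmetric}. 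Assembling these yields $k \le \left(\tfrac{1}{2}|G\Delta G'| - 1\right)(n+1)$.
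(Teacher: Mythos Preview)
Your proposal is correct and follows essentially the same route as the paper: a strong induction on $\kappa=\tfrac{1}{2}|G\Delta G'|$ that reproduces the case analysis of Lemma~\ref{TH1}, replaces every $3$-cycle reorientation by the $2(n+1)$ swaps of Proposition~\ref{prop:arc-swap-6}, and scales the telescoping bookkeeping by the factor $(n+1)$. The paper's proof differs only cosmetically---it decomposes $G\Delta G'$ into \emph{simple symmetric cycles} (Definition~\ref{def:simple-symmetric-difference}) rather than weakly connected components and therefore treats the size-$4$ cycle explicitly---but the logical skeleton and the accounting (including the key observation that one reorientation removes $3$ from $\kappa$ at cost $2(n+1)$, so the budget $(\kappa-1)(n+1)$ survives) are the same.
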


\begin{proof} We prove the lemma by induction according to the
  cardinality of the symmetric difference $|G\Delta G'|=2\kappa.$ For
  $\kappa:=2$ we get $|G\Delta G'|=4.$ The correctness of our claim
  follows with $G_1:=G'$. 

For $\kappa:=3$ we distinguish two cases. If $G \Delta G'$ consists of
exactly three vertices, then by Proposition~\ref{prop:arc-swap-6}
we get  a sequence of
  realizations $G_0,G_1,\dots,G_k$ and $k \leq 2n+2 = 2 (n+1)$, as
  claimed.
Otherwise, the symmetric difference $G \Delta G'$ consists of more
than three vertices. By Proposition~\ref{PR1}, case a), there are
 realizations $G_0,G_1,G_2=G'$. 

We assume the correctness of our induction 
  hypothesis for
  all $\kappa\leq \ell.$ Let $|G\Delta G'|=2\ell+2.$ We can assume
  that $\kappa>3.$ 
Suppose first that 
the symmetric difference $G \Delta G'$ decomposes into $t$ simple
symmetric cycles $|(G \Delta G')_i| = 6$. 
Suppose further that all these $(G \Delta G')_i$ consist of exactly
three vertices. 
Clearly, $|G \Delta G'|
= 6t$. We apply our Proposition~\ref{prop:arc-swap-6} to each of these
$t$ cycles one after another and get a sequence of realizations
$G_0, G_1, \dots, G_k=G'$ with $k \leq 2t (n+1) \leq (3t-1)(n+1)= 
\left(\frac{1}{2} |G\Delta G'|-1\right) \cdot (n+1)$.

Otherwise, there is a $(G \Delta G')_1$ which contains at least four
vertices. Swapping the arcs in $(G \Delta G')_1$ leads to a
realization $G^*$.
 By Proposition~\ref{PR1}, there are realizations $G=G_0,
G_1, G_2 = G^*$ with $|G_i \Delta G_{i+1}| = 4$. We can apply the
induction hypothesis on the remaining part of the symmetric difference.
Obviously, we obtain the desired bound in this case.

It remains the case that there exists a simple symmetric 
cycle $(G \Delta G')_1$
of $G \Delta G'$ with $|(G \Delta G')_1| \neq 6$. 
If $|(G \Delta G')_1| = 4$, we use a single swap on $(G \Delta G')_1$
and obtain a realization $G^*$, where $|G^* \Delta G'| = |G \Delta G'|-4$.
By the induction hypothesis, there is a sequence of realizations
$G^*= G_1, G_2 \dots, G_k=G'$ with $k-1 \leq \left(\frac{1}{2}
  |G^*\Delta G'|-1\right) \cdot (n+1) =  \left(\frac{1}{2}
  |G\Delta G'|-3\right) \cdot (n+1)$.
Otherwise, $|(G \Delta G')_1| \geq 8$. 
Using Proposition~\ref{PR2}, we may assume 
that there exists a vertex-disjoint directed 
alternating walk $P = (v_1,v_2,v_3,v_4)$ in $(G \Delta G')_1$
 with $(v_1,v_2),(v_3,v_4)\in A(G)$ and
$(v_3,v_2)\in A(G')$, for the same reasons as in the 
proof of Lemma~\ref{TH1}. 
 
\begin{enumerate}
 \item[case 1:] Assume $(v_1,v_4)\in A(G')\setminus A(G).$\\
This implies $(v_1,v_4)\in G\Delta G'$.
$G_1:=(G\setminus \{(v_1,v_2),(v_3,v_4)\})\cup \{(v_3,v_2),(v_1,v_4)\}$ is a realization of $S$ and it follows
$|G\Delta G_1|=4$ and $|G_1\Delta G'|=2\ell+2-4=2(\ell-1).$ 
 Therefore, we can apply the induction hypothesis on $|G_1\Delta G'|$. 
Thus, we obtain realizations $G_1,G_2,\dots,G_k$ with $G_k:=G'$ and
$|G_i \Delta G_{i+1}|=4$.
where $k-1\leq \left(\frac{1}{2}|G_1\Delta G'|-1\right) \cdot (n+1)
= \left(\frac{1}{2}|G\Delta G'|-3\right) \cdot (n+1)$.

\item[case 2:] Assume $(v_1,v_4)\in A(G)\cap A(G').$\\
This implies $(v_1,v_4)\notin G\Delta G'.$
We construct a new
alternating cycle $C^*:=((G \Delta G')_1\setminus P)\cup \{(v_1,v_4)\}$
with length $|C^*|=|(G \Delta G')_1|-2.$ 
We swap the arcs in $C^*$ and get a realization $G^*$ of S with
$|G_0\Delta G^*|=|C^*|\leq 2\ell$ and $|G^*\Delta G'|=|G \Delta
G'|-(|C^*|-1)+1\leq 2\ell.$ 
According to the induction hypothesis there exist sequences
$G_0^{1}:=G,G_1^{1},\dots,G_{k_1}^1:=G^*$ and
$G_0^{2}:=G^*,G_1^{2},\dots,G_{k_2}^2=G'$ with $k_1\leq
\left(\frac{1}{2}|G_0\Delta G^*|-1\right) \cdot (n+1)$ and 
$k_2\leq \left(\frac{1}{2} | G^* \Delta
G'| -1\right) \cdot(n+1)$. 
We arrange these sequences one after another and get a
sequence with $k=k_1+k_2=\left(\frac{1}{2}|G_0
\Delta G^*|-1 + \frac{1}{2} | G^* \Delta G'| -1\right) \cdot (n+1)
=\left(\frac{1}{2}|G \Delta G'|-1\right) \cdot (n+1)$.

\item[case 3:] Assume $(v_1,v_4)\in A(G)\setminus A(G').$\\
This implies $(v_1,v_4)\in G\Delta G'.$
Note that the arc $(v_1,v_4)$
does not belong to $(G \Delta G')_1$. Therefore, there exists 
an alternating sub-cycle $C^*:= (G \Delta G')_1 \cup \{ (v_1,v_4)\} \setminus P$ 
formed by arcs in $G\Delta G'$. 
We swap the arcs in $C^*$ and get a realization $G^*$ of S with 
$|G_0\Delta G^*|=|C^*|\leq 2\ell$ and $|G^*\Delta G'|=|G \Delta
G'|-|C^*|\leq 2\ell$. According to the induction hypothesis there
exist sequences $G_0^{1}:=G,G_1^{1},\dots,G_{k_1}^1:=G^*$ and
$G_0^{2}:=G^*,G_1^{2},\dots,G_{k_2}^2=G'$ with $k_1\leq
\left(\frac{1}{2}|G_0\Delta G^*|-1\right)\cdot (n+1)$ and 
$k_2\leq \left(\frac{1}{2} (|G \Delta G'|
- |C^*|)-1\right)\cdot (n+1) $. We arrange these sequences
one after another and get a sequence with 
 $k=k_1+k_2=\left(\frac{1}{2}|G_0\Delta G^*|-1+ \frac{1}{2} (|G \Delta G'|
- |C^*|)-1\right)\cdot (n+1) = \left(\frac{1}{2}(|G\Delta G'|)
-2\right) \cdot (n+1)$.

\item[case 4:] Assume $(v_1,v_4)\notin A(G)\cup A(G')$.
This implies $(v_1,v_4)\notin G\Delta G'.$
It exists the alternating cycle $C:=(P,(v_1,v_4))$ with $(v_1,v_4)\notin A(G).$ $G_1:=(G_0\setminus \{(v_1,v_2),(v_3,v_4)\})\cup \{(v_3,v_2),(v_1,v_4)\}$ is a realization of $S$ and it follows
$|G_0\Delta G_1|=4$ and $|G_1\Delta G'|=2\ell+2-2=2\ell.$ 
According to the induction hypothesis there exist realizations
$G_1,G_2,\dots,G_k$ with $G_k:=G'$  where $k_1:=1$ and 
$k_2:=k-1\leq \left(\frac{1}{2}|G_1\Delta G'|-1 \right) \cdot (n+1).$
Hence, we get the sequence $G_0,G_1,\dots,G_k$ with $k=k_1+k_2= 1+
\left(\frac{1}{2}|G_1\Delta G'|-1\right) \cdot (n+1)
\leq \left(\frac{1}{2}|G\Delta G'|-1\right) \cdot (n+1)$.
\end{enumerate}
\vspace*{-2ex}
\end{proof}

\begin{Corollary}\label{cor:R_ss-strongly-connected}
 State graph $\overline{\Phi}$ is a strongly connected directed graph if and only if a given sequence $S$ is an arc-swap-sequence.
\end{Corollary}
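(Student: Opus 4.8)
The plan is to prove the two directions separately, using the corollary as a bridge between the combinatorial connectivity statement of Lemma~\ref{th:arc-swap-main} and the structural Definition~\ref{Def:2}.

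For the direction ($\Leftarrow$), I would argue that it follows almost immediately from Lemma~\ref{th:arc-swap-main}. If $S$ is an arc-swap-sequence, that lemma guarantees that any two realizations $G,G'$ are joined by a finite sequence $G_0,\dots,G_k$ with $|G_i\Delta G_{i+1}|=4$, that is, by a walk in $\overline{\Phi}$ using only arcs of type~a). Hence $\overline{\Phi}$ is connected, and since we have already established that $\overline{\Phi}$ is symmetric, connectivity upgrades to strong connectivity.

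For the converse ($\Rightarrow$), I would argue by contraposition: assume $S$ is \emph{not} an arc-swap-sequence and exhibit a partition of $V_{\overline{\phi}}$ that no 2-swap can cross. By Definition~\ref{Def:2} there is a vertex set $V'=\{v_1,v_2,v_3\}$ that is an induced cycle set, so $G^*\langle V'\rangle$ is a directed $3$-cycle in every realization $G^*$. On three labeled vertices a directed $3$-cycle has exactly two orientations, so I would define a map $f\colon V_{\overline{\phi}}\to\{0,1\}$ recording which of the two orientations the arcs inside $V'$ realize. Both values are attained: starting from any realization, reorienting just the three arcs inside $V'$ preserves all in- and out-degrees and therefore yields a second, distinct realization with the opposite value of $f$.

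The crux is to show that $f$ is constant along every arc of $\overline{\Phi}$. An arc of type~a) corresponds to a single 2-swap, which deletes two arcs $(a,b),(c,d)$ on four \emph{distinct} vertices and inserts $(a,d),(c,b)$. Since $|V'|=3$, at most three of $a,b,c,d$ lie in $V'$, and a direct inspection shows that the swap alters at most one arc with both endpoints in $V'$ (it can delete at most one such arc and insert at most one). A single arc change turns the $3$-cycle $G\langle V'\rangle$ into a subdigraph on $V'$ that is not a directed $3$-cycle, contradicting the induced-cycle-set property of $V'$; hence no 2-swap may modify any arc inside $V'$ at all. Consequently every 2-swap leaves the orientation on $V'$ untouched, so $f$ is invariant under arcs of type~a) and trivially under the loops of types~b) and~c). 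Thus the two nonempty fibres $f^{-1}(0)$ and $f^{-1}(1)$ lie in different weak components of $\overline{\Phi}$, so $\overline{\Phi}$ is not even weakly connected, let alone strongly connected. The main obstacle I anticipate is precisely this arc-counting step: one must argue cleanly that a 2-swap touches at most one internal arc of $V'$ and that any such single change is incompatible with $V'$ being an induced cycle set, so that in fact \emph{no} internal arc is ever modified. Everything else is bookkeeping, and the exponential decomposition into $2^k$ components claimed in the Introduction follows by applying the same argument simultaneously to all $k$ induced cycle sets.
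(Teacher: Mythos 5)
Your proof is correct and follows essentially the route the paper intends: the ($\Leftarrow$) direction is exactly Lemma~\ref{th:arc-swap-main}, and the ($\Rightarrow$) direction is the induced-cycle-set obstruction that the paper leaves implicit (cf.\ Example~\ref{example:disconnectedness} and the later theorem on the $2^k$ isomorphic components). Your case analysis showing that a 2-swap deletes at most one arc internal to $V'$ and inserts at most one, and that any such change leaves a subdigraph on $V'$ that is neither of the two directed $3$-cycle orientations, correctly supplies the detail the paper omits.
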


An arc-swap-sequence implies the connectedness of the simple
realization graph $\overline{\Phi}$. 
Therefore, for such sequences we are able to
make random walks on the simple state graph $\overline{\Phi}$ which can
be implemented easily. 
We simplify the random walk Algorithm~\ref{alg:directed} for
arc-swap-sequences in using realization graph $\overline{\Phi}.$ Hence, our
data structure $DS$ only contains pairs of non-adjacent arcs. We can
ignore lines $12$ to $20$ in Algorithm~\ref{alg:directed}. We
denote this modified algorithm as the
\emph{Arc-Swap-Realization-Sample} Algorithm $3$.

\begin{Theorem} Algorithm $3$ is a random walk on the state graph
  $\overline{\Phi}$ which uniformly samples a directed graph $G'=(V,A)$
  as a realization of an arc-swap-sequence $S$  for $\tau\rightarrow \infty.$ 
 \end{Theorem}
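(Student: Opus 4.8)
The plan is to mirror, almost verbatim, the argument already used for Algorithm~\ref{alg:directed} on the full state graph $\Phi$, the only --- but decisive --- difference being that strong connectivity of the simpler graph $\overline{\Phi}$ now rests on the hypothesis that $S$ is an arc-swap-sequence. First I would verify that Algorithm~$3$ genuinely realizes a random walk on $\overline{\Phi}$. For Algorithm~$3$ the data structure $DS$ contains exactly the pairs of non-adjacent arcs of the current realization, and the number of such pairs is the constant
\[
  d_{\overline{\phi}} = {|A(G)|\choose 2}-\left(\sum_{i=1}^{n}{a_i \choose 2}+\sum_{i=1}^{n}{b_i \choose 2}+\sum_{i=1}^{n}a_ib_i\right),
\]
independent of the realization $G$, as established in the Lemma asserting regularity of $\overline{\Phi}$. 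In each iteration the algorithm draws one element of $DS$ uniformly, hence with probability $1/d_{\overline{\phi}}$, and then either performs the associated $2$-swap --- moving to the neighbour $V_{G'}$ joined to $V_G$ by case~a) of the construction of $A_{\overline{\phi}}$ --- or, if the swap would create a parallel arc, stays put, which corresponds to traversing one of the directed loops from cases~b) and c).

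Next I would read off the transition matrix exactly as in the proof of the theorem for $\Phi$. For $V_G \neq V_{G'}$ the probability of moving from $V_G$ to $V_{G'}$ equals $1/d_{\overline{\phi}}$ whenever $(V_G,V_{G'}) \in A_{\overline{\phi}}$ and is $0$ otherwise, while the holding probability $p_{V_GV_G}$ collects all the loop-transitions. This is precisely a random walk on $\overline{\Phi}$ with uniform arc-weight $1/d_{\overline{\phi}}$, so that the sequence of realizations produced by Algorithm~$3$ is distributed according to the powers of this transition matrix.

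It then remains to invoke the convergence criterion recalled in the Introduction: a random walk on a digraph tends to the uniform stationary distribution provided the digraph is non-bipartite, strongly connected, symmetric, and regular. Non-bipartiteness, symmetry, and regularity are supplied directly by the Lemma on $\overline{\Phi}$. The only non-trivial property, and hence the main obstacle, is strong connectivity, and this is exactly where the arc-swap hypothesis enters: by Corollary~\ref{cor:R_ss-strongly-connected}, $\overline{\Phi}$ is strongly connected \emph{if and only if} $S$ is an arc-swap-sequence. For a general sequence this fails --- as Example~\ref{example:disconnectedness} shows, $\overline{\Phi}$ then splits into $2^{k}$ isomorphic components --- which is precisely why the $2$-swap-only chain may not be used unconditionally.

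With all four properties in place, the distribution $P_t$ over the realizations converges to the uniform distribution as $\tau \to \infty$. I expect no genuine difficulty beyond strong connectivity, and that has already been settled in Corollary~\ref{cor:R_ss-strongly-connected} (which rests on Lemma~\ref{th:arc-swap-main}); everything else is the routine bookkeeping of transition probabilities that the proof for $\Phi$ has already exhibited.
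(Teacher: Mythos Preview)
Your proposal is correct and follows essentially the same approach as the paper: identify Algorithm~$3$ as a random walk on $\overline{\Phi}$ with uniform transition probability $1/d_{\overline{\phi}}$ along each arc, then invoke the Lemma on $\overline{\Phi}$ for non-bipartiteness, symmetry, and regularity, and Corollary~\ref{cor:R_ss-strongly-connected} for strong connectivity (which is precisely where the arc-swap hypothesis is consumed), to conclude convergence to the uniform distribution. Your write-up is, if anything, slightly more explicit than the paper's about why the arc-swap assumption is indispensable here.
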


\begin{proof}
 Algorithm $3$ chooses all elements in $DS$ with the same constant
 probability. For a vertex $V_G\in V_{\overline{\Phi}}$ there exist for all these
 pairs of arcs in $A(G')$ either incoming and outgoing arcs on
 $V_{G'} \in V_{\overline{\Phi}}$ or a loop. 
We get a transition matrix $M$ for $\overline{\Phi}$ with
$p_{ij}=\frac{1}{d}$ for $i,j\in A_{\overline{\Phi}},i\neq j$,
$p_{ij}=1-\sum_{\{i|(i,j)\in A_{\overline{\Phi}}\}}\frac{1}{d}$ for
$i,j\in V_{\phi},i=j,$ otherwise we set $p_{ij}=0$ where $d:={|A(G)|\choose
  2}-2\sum_{i=1}^{n}{a_i \choose 2}-\sum_{i=1}^{n}a_i b_i.$ 
Since, $\overline{\Phi}$ is a regular, strongly connected, symmetric, and non-bipartite directed graph, the distribution of all realizations in a $t$th step converges asymptotically to the uniform distribution, see Lovasz~\cite{Lovasz96}.
\end{proof}

\subsection{Practical Insights And Applications}\label{Sec:PracticalInsights} 

As mentioned in the Introduction, many ``practitioners'' use the
switching algorithm for the purpose of network analysis, regardless
whether the corresponding degree sequence is an arc-swap-sequence or not.
In this section we would like to discuss under which circumstances
this common practice can be well justified and when it may lead to
wrong conclusions.

What would happen if we sample
using the state graph $\overline{\Phi}$ for a sequence $S$
which is not an arc-swap-sequence? 
Clearly, we get the insight that $\overline{\Phi}$ has several
connected components, but as we will see $\overline{\Phi}$ consists of
at most $2^{\lfloor\frac{|V|}{3}\rfloor}$ isomorphic components containing
exactly the same realizations up to
the orientation of directed $3$-cycles each consisting of an induced cycle set $V'$. Fortunately, we can identify all induced cycle sets using our results in Theorem~\ref{th:characterizing-arc-swap-sequences} by only considering an arbitrary realization $G$. 

\begin{Proposition}\label{Pr2}
 Let $S$ be a graphical sequence which is not an arc-swap-sequence and
 has at least two different induced cycle sets $V'$ and $V''.$ Then it follows $V'\cap V''=\emptyset.$
\end{Proposition}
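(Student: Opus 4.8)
The plan is to argue by contradiction. Suppose $V'\cap V''\neq\emptyset$; since $|V'|=|V''|=3$ and $V'\neq V''$, the intersection has size $1$ or $2$. I would fix one arbitrary realization $G$. By Definition~\ref{Def:1}, both $G\langle V'\rangle$ and $G\langle V''\rangle$ are induced directed $3$-cycles, hence each carries exactly three arcs and the three reverse arcs are \emph{absent} in $G$. The whole strategy rests on one simple principle: if, starting from $G$, I can produce \emph{any} single valid realization $\tilde G$ in which $G\langle V'\rangle$ or $G\langle V''\rangle$ fails to be a directed $3$-cycle, I immediately contradict the induced-cycle-set property, since that property quantifies over all realizations. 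So the task reduces to exhibiting one degree-preserving modification of $G$ that destroys one of the two cycles.

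For the intersection of size $2$, write $V'=\{a,b,c\}$ and $V''=\{a,b,d\}$. The shared pair $\{a,b\}$ carries exactly one arc in $G$, say $(a,b)$, which then forces $G\langle V'\rangle=a\to b\to c\to a$ and $G\langle V''\rangle=a\to b\to d\to a$. The plan here is to apply a \emph{$3$-cycle reorientation} to $V'$, yielding a realization $G'$; this is valid because reorientation preserves all in- and out-degrees and the reversed arcs $(a,c),(c,b),(b,a)$ were absent. The reorientation flips the shared arc $(a,b)$ into $(b,a)$ while leaving $(b,d)$ and $(d,a)$ untouched, so in $G'$ the vertex set $V''$ induces the arcs $\{(b,a),(b,d),(d,a)\}$, in which $b$ has out-degree $2$. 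This is not a directed $3$-cycle, contradicting that $V''$ is an induced cycle set.

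For the intersection of size $1$, write $V'=\{a,b,c\}$ and $V''=\{a,d,e\}$ with common vertex $a$ and $b,c,d,e$ distinct; after relabeling inside each triple I may assume $G$ contains $a\to b\to c\to a$ and $a\to d\to e\to a$, so all six reverse arcs are absent. Here reorientation alone does not suffice, because reorienting $V'$ only touches arcs inside $\{a,b,c\}$ and thus cannot disturb the cycle on $V''$. Instead I would use a single $2$-swap that couples an arc of one cycle with an arc of the other, and split on whether $(c,e)\in A(G)$. If $(c,e)\notin A(G)$, swap the non-adjacent arcs $(c,a)$ and $(d,e)$ into $(c,e),(d,a)$; both new arcs are absent ($(d,a)$ as a reverse arc, $(c,e)$ by the case assumption), so this is a valid swap leaving $V'$ inducing only $\{(a,b),(b,c)\}$. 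If $(c,e)\in A(G)$, swap the non-adjacent arcs $(a,b)$ and $(c,e)$ into $(a,e),(c,b)$; both new arcs are reverse arcs and hence absent, so the swap is valid and leaves $V'$ inducing $\{(b,c),(c,a),(c,b)\}$, where $c$ has out-degree $2$. In either subcase $V'$ no longer induces a directed $3$-cycle, the desired contradiction.

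The main obstacle is entirely in the size-$1$ case: guaranteeing that a degree-preserving swap breaking a cycle is \emph{always applicable}, i.e. that the two arcs it introduces are not already present in $G$. The key observation that resolves this is that, because $G\langle V'\rangle$ and $G\langle V''\rangle$ are \emph{induced} $3$-cycles, all reverse arcs are forced to be absent, which for free supplies one of the two required ``absent'' arcs in each candidate swap; the remaining freedom is then controlled by the single dichotomy on the arc $(c,e)$. I would double-check that in each subcase the four endpoints of the swapped pair are genuinely distinct (so that the $2$-swap is well defined), and that the swap indeed removes an arc lying inside $V'$; both are immediate from the fixed orientations. Together the two cases show $V'\cap V''=\emptyset$.
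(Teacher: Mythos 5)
Your proof is correct and follows essentially the same strategy as the paper's: fix a realization in which both triples induce directed $3$-cycles and exhibit a single degree-preserving modification destroying one of them, splitting on the presence of one cross arc in the $|V'\cap V''|=1$ case just as the paper does. The only divergence is the $|V'\cap V''|=2$ case, where you use a $3$-cycle reorientation of $V'$ (flipping the shared arc) instead of the paper's $2$-swap dichotomy; this is a valid and slightly cleaner shortcut, since the induced-cycle-set property quantifies over all realizations however they are obtained.
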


\begin{proof}
Without loss of generality we can label the vertices in $V'$ with $v'_1,v'_2,v'_3$ and in $V''$ with $v''_1,v''_2,v''_3.$ Let $G$ be a realization where $\{(v_1',v'_2),(v_2',v_3'),(v_3',v_1'),(v_1'',v_2''),(v_2'',v_3''),(v_3'',v_1'')\}\subset A(G).$ We distinguish between two cases.
\begin{enumerate}
\item[a):] Assume $|V'\cap V''|=1$ where $v'_1=v''_1.$
 If it exists arc $(v_3'',v_3')\in A(G)$ we find the alternating $4$-cycle $(v_3'',v_3',v_1',v_2'',v_3'')$ which implies a new realization $G^*$ where $G^*\left\langle V' \right\rangle$ is not a directed cycle in contradiction to our assumption that $V'$ is an induced cycle set. Hence, it follows $(v_3'',v_3')\notin A(G).$ In this case we find the alternating cycle $(v_3'',v_1',v_2',v_3',v_3'').$
\item[b):] Assume $|V'\cap V''|=|\{v'_1,v'_2\}|=2$ where $v'_1=v_1''$
  and $v'_2=v_2''.$ If  arc $(v_3'',v_3')\notin A(G)$ exists we find the alternating $4$-cycle $(v_1',v_3'',v_3',v_2',v_1')$ which implies a new realization $G^*$ where $G^*\left\langle V' \right\rangle$ is not a directed cycle in contradiction to our assumption that $V'$ is an induced cycle set. Hence, it follows $(v_3'',v_3')\in A(G).$ In this case we find the alternating cycle $(v_3'',v_3',v_1',v_2',v_3'').$
\end{enumerate}
\vspace*{-2ex}
\end{proof}

As the induced $3$-cycles which appear in every realization are
vertex-disjoint, we can reduce the in- and out-degrees of all vertices
in these cycles by one, and obtain a new degree sequence which must be
an arc-swap sequence.

\begin{Theorem}
Let $S$ be a sequence. Then the state graph $\overline{\Phi}$ consists of at most $2^{\lfloor\frac{|V|}{3}\rfloor}$ isomorphic components.
 \end{Theorem}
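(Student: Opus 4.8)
The plan is to index the components of $\overline{\Phi}$ by the orientations of those directed $3$-cycles that are forced to appear in every realization, and then to show that each component is isomorphic to the (connected) simple state graph of a reduced arc-swap-sequence. First I would dispose of the trivial case: if $S$ is an arc-swap-sequence, then by Corollary~\ref{cor:R_ss-strongly-connected} the graph $\overline{\Phi}$ is strongly connected, i.e.\ it has a single component, matching $2^0$. Otherwise, using Definition~\ref{Def:2} together with Theorem~\ref{th:characterizing-arc-swap-sequences}, I would collect all induced cycle sets $V'_1,\dots,V'_k$, that is, all triples with $|V'_i|=3$ that induce a directed $3$-cycle in \emph{every} realization. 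By Proposition~\ref{Pr2} any two distinct induced cycle sets are vertex-disjoint, so the $3k$ involved vertices are pairwise distinct and hence $k\le\lfloor |V|/3\rfloor$.

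Next I would attach to every realization $G$ an orientation vector $\sigma(G)\in\{0,1\}^k$: on each triple $V'_i$ there are exactly two directed $3$-cycles, and $\sigma_i(G)$ records which one occurs. The key observation is that $\sigma$ is constant on each component of $\overline{\Phi}$. Indeed, two adjacent states $G,G'$ satisfy $|G\Delta G'|=4$, whereas reversing the orientation of an induced $3$-cycle would place all six of its internal arcs into the symmetric difference, forcing $|G\Delta G'|\ge 6$. Thus no $2$-swap can flip any $\sigma_i$, and in fact the $3k$ internal arcs of the induced cycle sets are literally fixed along every walk in $\overline{\Phi}$. Consequently the map sending a component to its common value of $\sigma$ is well defined, and this already bounds the number of components by $|\{0,1\}^k|=2^k\le 2^{\lfloor |V|/3\rfloor}$, provided I show this map is injective.

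For injectivity and for the isomorphism claim I would use the reduction announced just before the statement: deleting the $3k$ fixed internal arcs lowers every in- and out-degree of a cycle vertex by one and yields a sequence $\tilde S$ that is an arc-swap-sequence, so its simple state graph $\overline{\Phi}(\tilde S)$ is strongly connected by Corollary~\ref{cor:R_ss-strongly-connected}. Since a $2$-swap never touches the internal arcs, the operation of restricting a realization $G$ to $G\setminus(\text{internal arcs})$ is an adjacency-preserving bijection from the component of $G$ onto $\overline{\Phi}(\tilde S)$. This shows that every component is isomorphic to $\overline{\Phi}(\tilde S)$, hence the components are mutually isomorphic, and, because $\overline{\Phi}(\tilde S)$ is connected, that any two realizations with the same $\sigma$ lie in a single component. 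Injectivity of the orientation map follows, which gives at most $2^k\le 2^{\lfloor |V|/3\rfloor}$ isomorphic components.

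I expect the main obstacle to be making the reduction step rigorous. Concretely, I must verify that $\tilde S$ really is an arc-swap-sequence (equivalently, that it inherits no new induced cycle sets) and that the correspondence between $2$-swaps on $G$ and on $G\setminus(\text{internal arcs})$ is exact in both directions, i.e.\ that lifting a $2$-swap of $\tilde S$ back to $S$ never recreates one of the fixed internal arcs and hence never produces a parallel arc. Establishing this clean bijection between each component and $\overline{\Phi}(\tilde S)$ is precisely what simultaneously delivers the count, the injectivity, and the isomorphism.
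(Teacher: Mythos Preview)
Your proposal is correct and follows essentially the same route as the paper: bound the number $k$ of induced cycle sets via Proposition~\ref{Pr2}, fix an orientation vector, delete the internal $3$-cycle arcs to pass to a reduced arc-swap-sequence, and invoke the connectivity of the reduced state graph (the paper cites Lemma~\ref{th:arc-swap-main} directly rather than the corollary plus a bijection, but this is the same mechanism). The gap you flag---that the reduced sequence is genuinely arc-swap and that $2$-swaps lift without creating parallel arcs---is exactly the step the paper asserts without further justification, so your caution there is well placed but does not separate your argument from theirs.
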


\begin{proof}
We assume that $S$ is not an arc-swap-sequence, otherwise we apply
Theorem~\ref{cor:R_ss-strongly-connected} and get a strongly
connected digraph $\overline{\Phi}$. With Proposition \ref{Pr2} it follows the
existence of at most  $\lfloor\frac{|V|}{3}\rfloor$ induced cycle sets
for $S.$ Consider all realizations $G^j$ possessing a fixed
orientation of these induced $3$-cycles which implies $G^j\left\langle
  V_i\right\rangle=G^{j'}\left\langle V_i\right\rangle$ for all such
realizations. We pick out one of these orientation scenarios and
consider the symmetric difference $G^j\Delta G^{j'}$ of two such
realizations. Since, all induced $3$-cycles 
are identical in $G^j$ and $G^{j'}$, we can 
delete each arc 
of these induced cycle sets $V'$  and get
the reduced graphs $G^j_{c}$ and $G^{j'}_{c}.$ Both are realizations
of an arc-swap-sequence $S'$. Applying  Theorem~\ref{th:arc-swap-main} 
we obtain,
that there exist realizations $G_0:=G^j_{c},\dots,G_k:=G^{j'}_{c}$
$|G_i\Delta G_{i+1}|=4$ and $k\leq |G^j_{c}\Delta
G^{j'}_{c}|.$ Hence, each induced subdigraph $\overline{\Phi}\left\langle
  \{V_{G^{j}}|V_{G^{j}}\in V_{\overline{\phi}}\textnormal{ and } G^j\textnormal{ is
    a realization for one fixed orientation scenario}\}\right\rangle$
is strongly connected. On the other hand, we get for each fixed
orientation scenario exactly the same  realizations $G^{j}.$
Since, all induced $3$-cycles are isomorphic, it follows that all
realizations which are only different in the orientation of such
directed $3$-cycles are isomorphic. By
Theorem~\ref{th:characterizing-arc-swap-sequences}, 
there does not exist an alternating cycle destroying an induced
$3$-cycle. Hence, the state graph $\overline{\Phi}$ consists of
exactly $2^k$ strongly connected isomorphic components where $k$ 
is the number of induced cycle sets $V'.$
\end{proof}

\paragraph{Applications in Network Analysis}

Since the switching algorithm samples only in one single component of 
$\overline{\Phi}$, one has to be careful to get the correct
estimations for certain network statistics.
For network statistics on unlabeled graphs, it suffices 
to sample in a single component which
reduces the size of $V_{\overline{\Phi}}$ by a factor $2^k$, 
the number of components in
$\overline{\Phi}$,
where $k$ is the number of induced cycle sets of the prescribed degree
sequence.
Examples where this approach is feasible are 
network statistics like
the average diameter or the motif content over all
realizations.

For labelled graphs, however, the random walk on $V_{\overline{\Phi}}$
systematically over- and under-samples the probability that an arc is
present. 
Suppose that the random walk starts  with a realization $G=(V,A)$.
If an arc $(v_1,v_2) \in A(G)$ belongs to an induced cycle set,
it appears with probability 1 in all realizations of the random walk.
The opposite arc $(v_2,v_1) \not\in A(G)$, will never occur.
In an unbiased sampling over all realizations, each of these arcs,
however, occurs with probability $1/2$. 
All other arcs occur with the same probability in a single component
of $V_{\overline{\Phi}}$ as in the whole state graph.   
This observation can be used to compute correct probabilities for all arcs.

\section{Concluding Remarks}

In this paper, we have presented Markov chains
for sampling uniformly at random undirected and directed graphs with a
prescribed degree sequence. 
The key open problem remains to analyze whether these Markov chains
are rapidly mixing or not.

\newcommand{\etalchar}[1]{$^{#1}$}
\providecommand{\bysame}{\leavevmode\hbox to3em{\hrulefill}\thinspace}
\providecommand{\MR}{\relax\ifhmode\unskip\space\fi MR }
\providecommand{\MRhref}[2]{%
  \href{http://www.ams.org/mathscinet-getitem?mr=#1}{#2}
}
\providecommand{\href}[2]{#2}

\addtolength{\baselineskip}{0.3mm}

\newpage

\appendix

\centerline{\Large \bf Appendix}

\section{Further Examples Where Switching Fails}

As we have seen in Example~\ref{example:disconnectedness}, the
switching algorithm will fail in general.
Here we give further non-trivial classes of graphs where it also fails.
All problematic
instances are realizations which are different in at least one
directed $3$-cycle but not all of them are not changeable with
alternating $4$-cycles. 

Consider the following Figures \ref{PEineRichtung} and \ref{PDreiPartition}. Both examples cannot be changed to a realization which is only different in the orientation of the directed $3$-cycle by a sequence of alternating $4$-cycles.

\begin{figure}[hbp]
\centering
\includegraphics[width=7cm]{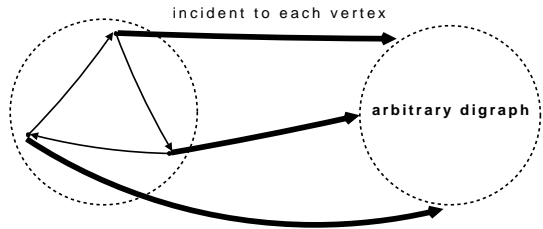}
\caption{\label{PEineRichtung} All vertices in a $3$-cycle are incident in one direction with vertices in an arbitrary subdigraph.}
\end{figure}

\begin{figure}[hbp]
\centering
\includegraphics[width=7cm]{}
\caption{\label{PDreiPartition} All vertices in a $3$-cycle are incident in both directions with a directed clique. An independent set of vertices is arbitrarily incident with the directed clique.}
\end{figure}


\begin{thebibliography}{MSOI{\etalchar{+}}02}


\bibitem[BBV07]{BezakovaBhatnagarVigoda07}
I.~Bez{\'a}kov{\'a}, N.~Bhatnagar, and E.~Vigoda,
\emph{Sampling binary contingency tables with a greedy start},
Random Structures and Algorithms, {\bf 30} (2007), 168--205. 

\bibitem[BKS09]{Bayati-etal09}
M.~Bayati, J.~H. Kim, and A.~Saberi, \emph{A sequential algorithm for
  generating random graphs}, Algorithmica, doi 10.1007/s00453-009-9340-1
  (2009).

\bibitem[CDG07]{CooperDyerGreenhill07}
C.~Cooper, M.~Dyer, and C.~Greenhill, \emph{Sampling regular graphs and a
  peer-to-peer network}, Combinatorics, Probability and Computing \textbf{16}
  (2007), 557--593.

\bibitem[Che66]{Chen66}
W.K. Chen, \emph{On the realization of a $(p,s)$-digraph with prescribed
  degrees}, J. Franklin Institute \textbf{281} (1966), 406--422.

\bibitem[EG60]{ErdosGallai60}
P.~Erd\H{o}s and T.~Gallai, \emph{{Gr\'{a}fok el\H{o}\'{i}rt fok\'{u} pontokkal
  (Graphs with prescribed degree of vertices)}}, Mat. Lapok \textbf{11} (1960),
  264--274, (in Hungarian).

\bibitem[EMT09]{Erdos-etal09}
P.~L. Erd\H{o}s, I.~Mikl{\'o}s, and Z.~Toroczkai, \emph{A simple
  {H}avel-{H}akimi type algorithm to realize graphical degree sequences of
  directed graphs}, arXiv:0905.4913v1, 2009.

\bibitem[Hak62]{Hakimi62}
S.L. Hakimi, \emph{On the realizability of a set of integers as degrees of the
  vertices of a simple graph}, {SIAM} J. Appl. Math \textbf{10} (1962),
  496--506.

\bibitem[Hak65]{Hakimi65}
S.L. Hakimi, \emph{On the degrees of the vertices of a directed graph}, J. Franklin
  Institute \textbf{279} (1965), 290--308.

\bibitem[Hav55]{Havel55}
V.~Havel, \emph{A remark on the existence of finite graphs. (czech)},
  \v{C}asopis P\v{e}st. Mat. \textbf{80} (1955), 477--480.


\bibitem[JS96]{JerrumSinclair97}
M.~Jerrum and A.~Sinclair, \emph{The {M}arkov chain {M}onte {C}arlo method: An approach to
  approximate counting and integration}, Approximation Algorithms for {NP}-hard
  Problems (D.S. Hochbaum, ed.), PWS Publishing, Boston, 1996, pp.~482--520.

\bibitem[JSV04]{JerrumSinclairVigoda04}
M.~Jerrum, A.~Sinclair and E.~Vigoda, \emph{A polynomial-time
  approximation algorithm for the permanent of a matrix with
  nonnegative entries}, Journal of the ACM {\bf 51} (2004), pp.~671--697.

\bibitem[KLP{\etalchar{+}}05]{Koschuetzki-et-al05}
D.~Kosch{\"u}tzki, K.~A. Lehmann, L.~Peeters, S.~Richter, D.~Tenfelde-Podehl,
  and O.~Zlotowski, \emph{Centrality indices}, Network Analysis: Methodological
  Foundations (U.~Brandes and T.~Erlebach, eds.), Lecture Notes in Computer
  Science, vol. 3418, Springer, 2005, pp.~16--61.

\bibitem[KTV99]{KannanTetaliVempala99}
R.~Kannan, P.~Tetali, and S.~Vempala, \emph{Simple {M}arkov-chain algorithms
  for generating bipartite graphs and tournaments}, Random Structures and
  Algorithms \textbf{14} (1999), 293--308.

\bibitem[KV03]{KimVu03}
J.H. Kim and V.H. Vu, \emph{Generating random regular graphs}, {STOC} 2003,
  2003, pp.~213--222.

\bibitem[KW73]{KleitmanWang73}
D.J.~Kleitman and D.L.~Wang, \emph{Algorithm for constructing graphs
  and digraphs with given valences and factors}, Discrete Math. 
{\bf 6} (1973) 79--88.

\bibitem[LaM09]{LaMar09}
M.~D.~LaMar, \emph{Algorithms for realizing degree sequences for
  directed graphs}, arXiv.org:0906.0343v1 (2009).

\bibitem[Lov96]{Lovasz96}
L.~Lov{\'a}sz, \emph{Random walks on graphs: A survey}, Combinatorics, Paul
  Erd\H{o}s is Eighty (D.~Mikl{\'o}s et~al., ed.), vol.~2, J{\'a}nos Bolyai
  Mathematical Society., 1996, pp.~353--397.

\bibitem[MIK{\etalchar{+}}04]{Milo-et-al04b}
R.~Milo, S.~Itzkovitz, N.~Kashtan, R.~Levitt, S.~Shen-Orr, I.~Ayzenshtat,
  M.~Sheffer, and U.~Alon, \emph{Superfamilies of evolved and designed
  networks}, Science \textbf{303} (2004), 1538--1542.

\bibitem[MKI{\etalchar{+}}04]{Milo-etal04}
R.~Milo, N.~Kashtan, S.~Itzkovitz, M.E.J. Newman, and U.~Alon, \emph{On the
  uniform generation of random graphs with arbitrary degree sequences},
  arXiv:cond-mat/0312028v2, 30 May 2004, 2004.

\bibitem[Moh91]{BM:91}
B.~Mohar, \emph{Eigenvalues, diameter and mean distance in graphs}, Graphs and
  Combinatorics \textbf{7} (1991), 53--64.

\bibitem[MSOI{\etalchar{+}}02]{Milo-et-al02}
R.~Milo, S.~Shen-Orr, S.~Itzkovitz, N.~Kashtan, D.~Chklovskii, and U.~Alon,
  \emph{Network motifs: simple building blocks of complex networks}, Science
  \textbf{298} (2002), 824--827.

\bibitem[MW90]{McKayWormald90}
B.~McKay and N.C. Wormald, \emph{Uniform generation of random regular graphs of
  moderate degree}, J. Algorithms \textbf{11} (1990), 52--67.

\bibitem[MW91]{McKayWormald91}
B.~McKay and N.C. Wormald, \emph{Asymptotic enumeration by degree sequence of graphs with degrees
  $o(n^{1/2} )$}, Combinatorica \textbf{11} (1991), 369--382.

\bibitem[RJB96]{RaoJanaBandyopadhyay96}
A.R.~Rao, R.~Jana, and S.~Bandyopadhyay, \emph{A {M}arkov chain
  {M}onte {C}arlo method for generating random (0,1)--matrices with
  given marginals}, Sankhya: The Indian Journal of Statistics
\textbf{58} (1996), 225--242.

\bibitem[Rys57]{Ryser57}
H.~J. Ryser, \emph{Combinatorial properties of matrices of zeroes and ones},
  Canadian J. Math \textbf{9} (1957), 371--377.

\bibitem[Sch03]{Schrijver03}
A.~Schrijver, \emph{Combinatorial optimization: Polyhedra and efficiency},
  Springer, 2003.

\bibitem[Sin92]{Sinclair92}
A.~Sinclair, \emph{Improved bounds for mixing rates of {M}arkov chains and
  multicommodity flow}, Combinatorics, Probability {\&} Computing \textbf{1}
  (1992), 351--370.

\bibitem[Sin93]{Sinclair93}
A.~Sinclair, \emph{Algorithms for random generation and counting: A {M}arkov chain
  approach}, Birkh{\"a}user, 1993.

\bibitem[SW99]{StegerWormald99}
A.~Steger and N.~Wormald, \emph{Generating random regular graphs quickly},
  Combinatorics, Probability, and Computing \textbf{8} (1999), 377--396.

\bibitem[Tut52]{Tutte52}
W.T. Tutte, \emph{The factors of graphs}, Canadian J. of Mathematics \textbf{4}
  (1952), 314--328.

\bibitem[Tut54]{Tutte54}
W.T. Tutte, \emph{A short proof of the factors theorem for finite graphs},
  Canadian J. Of Mathematic \textbf{6} (1954), 347--352.

\end{thebibliography}
\end{document}